\let\proof\relax   
\newtheorem{lemma}{Lemma}
\newtheorem{theorem}{Theorem}
\newtheorem{definition}{Definition}
\newtheorem{example}{Example}
\newcommand{\abs}[1]{\lvert{#1}\rvert}
\newcommand{\mbf}[1]{\mathbf{#1}}
\newcommand*{\transpose}{%
  {\mathpalette\@transpose{}}%
}
\begin{document}

\newcommand{\SB}[3]{
\sum_{#2 \in #1}\biggl|\overline{X}_{#2}\biggr| #3
\biggl|\bigcap_{#2 \notin #1}\overline{X}_{#2}\biggr|
}

\newcommand{\Mod}[1]{\ (\textup{mod}\ #1)}

\newcommand{\overbar}[1]{\mkern 0mu\overline{\mkern-0mu#1\mkern-8.5mu}\mkern 6mu}

\makeatletter
\newcommand*\nss[3]{%
  \begingroup
  \setbox0\hbox{$\m@th\scriptstyle\cramped{#2}$}%
  \setbox2\hbox{$\m@th\scriptstyle#3$}%
  \dimen@=\fontdimen8\textfont3
  \multiply\dimen@ by 4             % 4x the default rule thickness
  \advance \dimen@ by \ht0
  \advance \dimen@ by -\fontdimen17\textfont2
  \@tempdima=\fontdimen5\textfont2  % x-height
  \multiply\@tempdima by 4
  \divide  \@tempdima by 5          % 80% of the x-height
  % Modifications are only necessary if the top of the subscript is not that high:
  \ifdim\dimen@<\@tempdima
    \ht0=0pt                        % don't let the subscript interfere
    \@tempdima=\fontdimen5\textfont2
    \divide\@tempdima by 4          % 25% of the x-height
    \advance \dimen@ by -\@tempdima % if >0, add to depth of superscript!
    \ifdim\dimen@>0pt
      \@tempdima=\dp2
      \advance\@tempdima by \dimen@
      \dp2=\@tempdima
    \fi
  \fi
  #1_{\box0}^{\box2}%
  \endgroup
  }
\makeatother

\makeatletter
\renewenvironment{proof}[1][\proofname]{\par
  \pushQED{\qed}%
  \normalfont \topsep6\p@\@plus6\p@\relax
  \trivlist
  \item[\hskip\labelsep
        \itshape
%    #1\@addpunct{.}]\ignorespaces% DELETED
    #1\@addpunct{:}]\ignorespaces% ADDED
}{%
  \popQED\endtrivlist\@endpefalse
}
\makeatother

\makeatletter
\newsavebox\myboxA
\newsavebox\myboxB
\newlength\mylenA

\newcommand*\xoverline[2][0.75]{%
    \sbox{\myboxA}{$\m@th#2$}%
    \setbox\myboxB\null% Phantom box
    \ht\myboxB=\ht\myboxA%
    \dp\myboxB=\dp\myboxA%
    \wd\myboxB=#1\wd\myboxA% Scale phantom
    \sbox\myboxB{$\m@th\overline{\copy\myboxB}$}%  Overlined phantom
    \setlength\mylenA{\the\wd\myboxA}%   calc width diff
    \addtolength\mylenA{-\the\wd\myboxB}%
    \ifdim\wd\myboxB<\wd\myboxA%
       \rlap{\hskip 0.5\mylenA\usebox\myboxB}{\usebox\myboxA}%
    \else
        \hskip -0.5\mylenA\rlap{\usebox\myboxA}{\hskip 0.5\mylenA\usebox\myboxB}%
    \fi}
\makeatother

\xpatchcmd{\proof}{\hskip\labelsep}{\hskip3.75\labelsep}{}{}

\pagestyle{plain}
%\pagenumbering{gobble}

\title{\fontsize{22.59}{28}\selectfont Sparse Graph Codes for Non-adaptive \\ Quantitative Group Testing}

\author{Esmaeil Karimi, Fatemeh Kazemi, Anoosheh Heidarzadeh, Krishna R. Narayanan, and Alex Sprintson\thanks{The authors are with the Department of Electrical and Computer Engineering, Texas A\&M University, College Station, TX 77843 USA (E-mail: \{esmaeil.karimi, fatemeh.kazemi, anoosheh, krn, spalex\}@tamu.edu).}}

%\thanks{This work was supported by the National Science Foundation under Grant No.~CNS-0954153 and the AFOSR under Contract No.~FA9550-13-1-0008.}

\maketitle

\begin{abstract}
%This paper considers the problem of non-adaptive quantitative group testing (QGT) of $N$ items among which $K$ items are defective. In a non-adaptive QGT scheme, a set of tests are designed to identify defective items among a large population of items, where the outcome of a test shows the number of defective items in the tested group. 
This paper considers the problem of Quantitative Group Testing (QGT). Consider a set of $N$ items among which $K$ items are defective. The QGT problem is to identify (all or a sufficiently large fraction of) the defective items, where the result of a test reveals the number of defective items in the tested group.
In this work, we propose a non-adaptive QGT algorithm using sparse graph codes over bi-regular bipartite graphs with left-degree $\ell$ and right degree $r$ and binary $t$-error-correcting BCH codes. The proposed scheme provides exact recovery with probabilistic guarantee, i.e. recovers all the defective items with high probability. In particular, we show that for the sub-linear regime where $\frac{K}{N}$ vanishes as $K,N\rightarrow\infty$, the proposed algorithm requires at most ${m=c(t)K\left(t\log_2\left(\frac{\ell N}{c(t)K}+1\right)+1\right)+1}$ tests to recover all the defective items with probability approaching one as ${K,N\rightarrow\infty}$, where $c(t)$ depends only on $t$. The results of our theoretical analysis reveal that the minimum number of required tests is achieved by $t=2$. The encoding and decoding of the proposed algorithm for any $t\leq 4$ have the computational complexity of $\mathcal{O}(K\log^2 \frac{N}{K})$ and $\mathcal{O}(K\log \frac{N}{K})$, respectively. Our simulation results also show that the proposed algorithm significantly outperforms a non-adaptive semi-quantitative group testing algorithm recently proposed by Abdalla \emph{et al.} in terms of the required number of tests for identifying all the defective items with high probability.
\end{abstract}

\section{introduction}
%{\color{red} Anoosheh: I am going to change the first paragraph.}
In this work, we consider the problem of Quantitative Group Testing (QGT). Consider a set of $N$ items among which $K$ items are defective. The QGT problem is to identify (all or a sufficiently large fraction of) the defective items, where the result of a test reveals the number of defective items in the tested group. The key difference between the QGT problem and the original group testing problem is that, unlike the former, in the latter the result of each test is either $1$ or $0$ depending on whether the tested group contains any defective items or not. The objective of QGT is to design a test plan with minimum number of tests that identifies (all or a sufficiently large fraction of) the defective items. 

There are two general categories of test strategies: \emph{non-adaptive} and \emph{adaptive}. In an adaptive scheme, each test depends on the outcomes of the previous tests. On the other hand, in a non-adaptive scheme, all tests are planned in advance. In other words, the result of one test does not affect the design of another test. Although, in general, adaptive algorithms require fewer tests, in most practical applications non-adaptive algorithms are preferred since they allow one to perform all tests at once in parallel.

% %The terms \textit{closeness} or \textit{guarantee} can be defined in different ways depending on the underlying application. There are different types of closeness. 

Let $S$ be the index set of the defective items and $\hat{S}$ be an estimation of $S$. Depending on the application at hand, there can be different requirements for the \emph{closeness} of $\hat{S}$ to $S$ \cite{7953326,DBLP:journals/corr/LeePR15}. The strongest condition for closeness is \emph{exact recovery} when it is required that $\hat{S}=S$. Two weaker conditions are \emph{partial recovery without false detections} when it is required that $\hat{S} \subseteq S$ and $|\hat{S}|\geq (1-\epsilon)|S|$, and \emph{partial recovery without missed detections} when it is required that $S \subseteq \hat{S}$ and $|\hat{S}|\leq (1+\epsilon)|S|$. There are also different types of the \emph{recovery guarantees}~\cite{DBLP:journals/corr/LeePR15}. The strongest guarantee is \emph{perfect recovery guarantee} when the exact or partial recovery needs to be achieved with probability $1$ (over the space of all problem instances). A slightly weaker guarantee is \emph{probabilistic recovery guarantee} when it suffices to achieve the exact or partial recovery with high probability only (and not necessarily with probability $1$). In this work, we are interested in the exact recovery of all defective items with the probabilistic recovery guarantee. 

% ~\cite{djackov1975search}

\subsection{Related Work and Applications}
The QGT problem has been extensively studied for a wide range of applications, e.g., multi-access communication, spectrum sensing, and network tomography, see, e.g.,~\cite{WZC:17,heidarzadehfast,heidarzadehuser}, and references therein. This problem was first introduced by Shapiro in~\cite{S:60}. %as the coin weighing problem with a spring scale . 
%Shapiro studied the QGT problem for unknown $K$ (i.e., when the number of defective items, although fixed, is not known a priori), but later the QGT problem was studied for known $K$, e.g., in~\cite{AS:85,A:86}. 
Several non-adaptive and adaptive QGT strategies have been previously proposed, see, e.g.,~\cite{B:09,WZC:17,8437774}.  %In this work, we focus on non-adaptive algorithms. 
It was shown in~\cite{L:75} that any non-adaptive algorithm must perform at least $(2K\log_2 (N/K))/\log_2 K$ tests. %\footnote{Throughout the paper the base of $\log$ is $2$.} %, unless explicitly noted otherwise.} %The work in \cite{grebinski2000optimal} shows the existence of a non-adaptive algorithm that executes $\frac{4K\log \frac{N}{K}}{\log K}$ tests, non-constructively. 
Various order optimal or near-optimal non-adaptive strategies were previously proposed, see, e.g.,~\cite{L:75,8437774,B:09}. The best known polynomial-time non-adaptive algorithms require $K\log N$ tests~\cite{lindstrom1972b2,L:75}. 
% In~\cite{B:09}, Bshouty proposed the first and only known order-optimal adaptive strategy for unknown $K$. 
Recently, a semi-quantitative group testing scheme based on sparse graph codes was proposed in~\cite{8335478}, where the result of each test is an integer in the set $\{0,1,2,\dots,L\}$. This strategy identifies a $(1-\epsilon)$ fraction of defective items using $c(\epsilon,L)K\log_2 N$ tests with high probability, where $c(\epsilon,L)$ depends only on $\epsilon$ and $L$.  

\subsection{Connection with Compressed Sensing}
A closely related problem to QGT is the problem of compressed sensing (CS) in which the goal is to recover a sparse signal from a set of (linear) measurements. Given an $N$-dimensional sparse signal with a support size up to $K$, the objective is to identify the indices and the values of non-zero elements of the signal with minimum number of measurements. The main differences between the CS problem and the QGT problem are in the signal model and the constraints on the measurement matrix. Most of the existing works on the CS problem consider real-valued signals and measurement matrices. The QGT problem, however, deals with binary signals and requires the measurement matrix to be binary-valued. 

There are a number of CS algorithms in the literature that use binary-valued measurement matrices, see, e.g.~\cite{8368314,IWEN20141} and references therein. However, these strategies either use techniques which are not applicable to binary signals, or provide different types of closeness and guarantee than those required in this work. There are also several CS algorithms for the support recovery where the objective is to determine the indices of the non-zero elements of the signal but not their values~\cite{DBLP:journals/corr/LiPR14,pmlr-v51-scarlett16,5766202}. The support recovery problem is indeed equivalent to the QGT problem. Notwithstanding, the existing schemes for support recovery rely on non-binary measurement matrices, and hence are not suitable for the QGT problem. 

Last but not least, to the best of our knowledge, the majority of works on the CS problem focus mainly on the order optimality of the number of measurements, whereas in this work for the QGT problem we are also interested in minimizing the constant factor hidden in the order. 

\subsection{Main Contributions}
In this work, we propose a non-adaptive quantitative group testing strategy for the sub-linear regime where $\frac{K}{N}$ vanishes as $K,N\rightarrow\infty$. We utilize sparse graph codes over bi-regular bipartite graphs with left-degree $\ell$ and right-degree $r$ and binary $t$-error-correcting BCH codes for the design of the proposed strategy. Leveraging powerful density evolution techniques for the analysis enables us not only to determine the exact value of constants in the number of tests needed but also to provide provable performance guarantees. We show that the proposed scheme provides exact recovery with probabilistic guarantee, i.e. recovers all the defective items with high probability. In particular, for the sub-linear regime, the proposed algorithm requires at most ${m=c(t)K\left(t\log_2\left(\frac{\ell N}{c(t)K}+1\right)+1\right)+1}$ tests to recover all defective items with probability approaching one as ${K,N\rightarrow\infty}$, where $c(t)$ depends only on $t$. 

The results of our theoretical analysis reveal that the minimum number of required tests for the proposed algorithm is achieved by $t=2$. Moreover, for any $t\leq 4$, the encoding and decoding of the proposed algorithm have the computational complexity of $\mathcal{O}(K\log^2 \frac{N}{K})$ and $\mathcal{O}(K\log \frac{N}{K})$, respectively. 

%Due to the space constraints, the proofs of lemmas are not presented here, and can be found in~\cite{DBLP:journals/corr/abs-1901-07635}.

\section{Problem Setup and Notation}\label{sec:SN}
Throughout the paper, we use bold-face small and capital letters to denote vectors and matrices, respectively. 

In this work, we consider the problem of quantitative group testing (QGT) with exact recovery and probabilistic guarantee, defined as follows. Consider a set of $N$ items among which $K$ items are defective. We focus on the sub-linear regime where the ratio $\frac{K}{N}$ vanishes as $K,N\rightarrow\infty$. The problem is to identify all the defective items with high probability while using minimum number of tests on subsets (groups) of the items, where the result of each test shows the number of defective items in the tested group.

Let the vector $\mathbf{x}\in \{0,1\}^N$ represent the set of $N$ items in which the coordinates with value $1$ correspond to the defective items. A non-adaptive group testing problem consisting of $m$ tests can be represented by a measurement matrix ${\textbf{A}\in \{0,1\}^{m\times N}}$, where the $i$-th row of the matrix corresponds to the $i$-th test. That is, the coordinates with value $1$ in the $i$-th row correspond to the items in the $i$-th test.
The results of the $m$ tests are expressed in the test vector $\mathbf{y}\in\{ 0,1,\dots\}^m$, i.e., % denotes the set of non-negative integers, i.e.,
\begin{equation}\label{eq:gtresult}
\mathbf{y}=[y_{1},\cdots,y_{m}]^{\mathsf{T}}=\mathbf{A}\mathbf{x}.
\end{equation}
The goal is to design a testing matrix $\mathbf{A}$ that has a small number of rows (tests), $m$, and can identify with high probability all the defective items given the test vector $\mathbf{y}$.

\section{Proposed Algorithm \label{sec:main results}}
\subsection{Binary $t$-error-correcting codes and $t$-separable matrices}
%Consider an $m\times n$ binary matrix $\mathbf{D}$ such that $n>t$. 

\begin{definition}\label{def:d-separable}($t$-separable matrix)
A binary matrix ${\mathbf{D}\in\{0,1\}^{m\times n}}$ (for $n>t$) is $t$-separable over field $\mathbb{F}$ if the sum (over field $\mathbb{F}$) of any set of $t$ columns is distinct.
\end{definition}
\begin{example}
Consider the following matrix,
\begin{align*}
    \mathbf{D}= \begin{bmatrix}
           0 & 1 & 0 & 1 \\
           0 & 1 & 1 & 0 \\
           0 & 0 & 1 & 1 \\
         \end{bmatrix}.
  \end{align*}
  The matrix $\mathbf{D}$ is $2$-separable over real field $\mathbb{R}$, but it is not $2$-separable over $\mathbb{F}_2$ since, for instance, the sum of the first and second columns over $\mathbb{F}_2$ is the same as the sum of the third and fourth columns over $\mathbb{F}_2$. 
  \begin{align*}
    \begin{bmatrix}
           0  \\
           0 \\
           0
    \end{bmatrix} \oplus \begin{bmatrix}
           1  \\
           1 \\
           0
    \end{bmatrix}=\begin{bmatrix}
           0  \\
           1 \\
           1
    \end{bmatrix} \oplus \begin{bmatrix}
           1  \\
           0 \\
           1
    \end{bmatrix}=\begin{bmatrix}
           1  \\
           1 \\
           0
    \end{bmatrix}.
  \end{align*}
\end{example}
From the definition, it can be easily seen that if a matrix $\mathbf{D}$ (with $n$ columns) is $t$-separable over a field $\mathbb{F}$, then $\mathbf{D}$ is also $s$-separable over $\mathbb{F}$ for any $1\leq s < t < n$.  

The vector of test results, $\mbf{y}$, is the sum of the columns in the testing matrix corresponding to the coordinates of the defective items. When a $t$-separable matrix over $\mathbb{R}$ is used as the testing matrix, the vector $\mbf{y}$ will be distinct for any set of $t$ defective items. Thus, a $t$-separable matrix over $\mathbb{R}$ can be used as the testing matrix for identifying $t$ defective items. However, the construction of {$t$-separable} matrices for arbitrary $t$ with minimum number of rows is an open problem. Instead, we can leverage the idea that the parity-check matrix of any binary  $t$-error-correcting code is a $t$-separable matrix over $\mathbb{F}_2$. Note that $t$-separability over $\mathbb{F}_2$ results in $t$-separability over $\mathbb{R}$. Hence, a possible choice for designing a $t$-separable matrix over $\mathbb{R}$ is utilizing the parity-check matrix of a binary $t$-error-correcting code. 

In this work, we use binary BCH codes for this purpose. The key feature of the BCH codes which make them suitable for designing $t$-separable matrices is that it is possible to design binary BCH codes, capable of correcting any combination of $t$ or fewer errors. 
\begin{definition}\label{def:bch}\cite{lin2001error} (Binary BCH codes)
For any positive integers $m\geq 3$ and $t < 2^{m-1}$, there exists a binary $t$-error-correcting BCH code with the following parameters:
\[   
     \begin{cases}
       \text{$n=2^m-1$} &\quad\text{block length}\\
        \text{$n-k \leq mt$} &\quad \text{number of parity-check digits}\\
       \text{$d_{\min}\geq 2t+1$} &\quad\text{minimum Hamming distance}\\
     \end{cases}
\]
The $t\times n$ parity-check matrix of such a code is given by
\[ 
    \mathbf{H}_t=  \begin{bmatrix}
           1 & \alpha & \alpha^2 & \dots &  \alpha^{n-1}\\
           
           1 & (\alpha^3) & (\alpha^3)^2 &  \dots & (\alpha^3)^{n-1}\\
            1 & (\alpha^5) & (\alpha^5)^2 &  \dots &  (\alpha^5)^{n-1}\\
            \vdots& \vdots& \vdots & \ddots & \vdots\\
           1 & (\alpha^{2t-1}) & (\alpha^{2t-1})^2 & \dots &  (\alpha^{2t-1})^{n-1}\\
           
         \end{bmatrix},
  \]
  where $\alpha$ is a primitive element in $\mathbb{F}_{2^m}$.
  \end{definition}
  
  Since each entry of $\mathbf{H}_t$ is an element in $\mathbb{F}_{2^m}$, it can be represented by an $m$-tuple over $\mathbb{F}_2$. Thus, the number of rows in the binary representation of $\mathbf{H}_t$ is given by 
 \begin{equation}\label{eq:rows}
  R=tm=t\log_2 ({n+1}).
  \end{equation}

\subsection{Encoding algorithm}
 
The design of the measurement matrix $\mathbf{A}$ in our scheme is based on an architectural philosophy that was proposed in~\cite{DBLP:journals/corr/LeePR15} and~\cite{DBLP:journals/corr/VemJN17}.
The key idea is to design $\mathbf{A}$ using a sparse bi-regular bipartite graph and to apply a peeling-based iterative algorithm for recovering the defective items given $\mathbf{y}$.

Let $G_{\ell,r} (N,M)$ be a randomly generated bipartite graph where each of the $N$ left nodes is connected to $\ell$ right nodes uniformly at random, and each of the $M$ right nodes is connected to $r$ left nodes uniformly at random. Note that there are $N\ell$ edge connections from the left side and $Mr$ edge connections from the right side,
\begin{equation}\label{eq:connections}
    N\ell=Mr
\end{equation}

Let $\mathbf{T}_{G} \in \{0,1\}^{M\times N}$ be the adjacency matrix of the graph $G_{\ell,r} (N,M)$, where each column in $\mathbf{T}_{\mathcal{G}}$ corresponds to a left node and has exactly $\ell$ ones, and each row corresponds to a right node and has exactly $r$ ones. Let $\mathbf{t}_i \in \{0,1\}^{N}$ denote the $i$-th row of $\mathbf{T}_{G}$, i.e.,  $\mathbf{T}_{G}=[\mathbf{t}_{1}^{\mathsf{T}},\mathbf{t}_{2}^{\mathsf{T}},\cdots,\mathbf{t}_{M}^{\mathsf{T}}]^{\mathsf{T}}$. We assign $s$ tests to each right node based on a signature matrix $\mathbf{U}\in \{0,1\}^{s\times r}$. The signature matrix is chosen as ${\mathbf{U}=[\mathbf{1}_{1\times r}^{\mathsf{T}} ,\mathbf{H}_t^{\mathsf{T}}]^{\mathsf{T}}}$, where $\mathbf{1}_{1\times r} $ is an all-ones row of length $r$, and ${\mathbf{H}_t \in \{0,1\}^{t\log_2(r+1)\times r}}$ is the parity-check matrix of a binary $t$-error-correcting BCH code of length $r$. From~\eqref{eq:rows}, it can be easily seen that ${s=R+1=t\log_2 (r+1)+1}$. 

The measurement matrix is given by ${\mathbf{A}=[\mathbf{A}_{1}^{\mathsf{T}},\cdots,\mathbf{A}_{M}^{\mathsf{T}}]^{\mathsf{T}}}$ where ${\mathbf{A}_i\in \{0,1\}^{s\times N}}$ is a matrix that defines the $s$ tests at the $i$-th right node. There are exactly $r$ ones in each row $\mathbf{t}_i$ of $\mathbf{T}_{G}$, and the signature matrix $\mathbf{U}=[\mathbf{u}_1,\mathbf{u}_2,\cdots,\mathbf{u}_r]$ has $r$ columns. Note that $\mbf{u}_i=[1,\mbf{h}_i^{\mathsf{T}}]^{\mathsf{T}}$ is the $i$-th column of $\mbf{U}$, where $\mbf{h}_i$ is the $i$-th column of $\mbf{H}_t$. $\mathbf{A}_i$ is obtained by placing the $r$ columns of $\mathbf{U}$ at the coordinates of the $r$ ones of the row vector $\mathbf{t}_i$, and replacing zeros by all-zero columns,  
\begin{align}\label{eq:measureblock}
 \mbf{A}_i=[\mbf{0},\ldots,\mbf{0},\mbf{u}_1, \mbf{0},\ldots, \mbf{u}_2,\mbf{0}, \ldots, \mbf{u}_{r}]
 \end{align} where $\mbf{t}_i =[0,\ldots,0,\hspace{0.6ex}1,\hspace{0.9ex} 0, \ldots,\hspace{0.6ex}1,\hspace{0.9ex}0, \ldots, \hspace{0.9ex}1]$.  
 
The number of rows in the measurement matrix $\mathbf{A}$, ${m=M\times s}$ where $s=t\log_2 (r+1)+1$, represents the total number of tests in the proposed scheme.
\begin{example}\label{ex:example1}
Let $N=14$ be the total number of items. Let $G$ be a randomly generated left-and-right-regular graph with $N$ left nodes of degree $\ell=2$ and $M=4$ right nodes of degree $r=7$. For this example, suppose that the adjacency matrix $\mathbf{T}_{G}$ of the graph $G$ is given by
\setcounter{MaxMatrixCols}{14}
\[
\mathbf{T}_{\mathcal{G}}= \begin{bmatrix}
     \textcolor{blue}{1} & 0 & \textcolor{lime}{1} & 0 & \textcolor{orange}{1} & 0 & \textcolor{green}{1} & 0 & \textcolor{red}{1} & 0 & \textcolor{brown}{1} & 0 & 0 & \textcolor{yellow}{1} \\
     0 & \textcolor{blue}{1} & \textcolor{lime}{1} & 0 & 0 & \textcolor{orange}{1} & 0 & \textcolor{green}{1} & 0 & \textcolor{red}{1} & 0 & \textcolor{brown}{1} & 0 & \textcolor{yellow}{1} \\
     0 & \textcolor{blue}{1} & 0 & \textcolor{lime}{1} & 0 & \textcolor{orange}{1} & \textcolor{green}{1} & 0 & 0 & \textcolor{red}{1} & \textcolor{brown}{1} & 0 & \textcolor{yellow}{1} & 0 \\
     \textcolor{blue}{1} & 0 & 0 & \textcolor{lime}{1} & \textcolor{orange}{1} & 0 & 0 & \textcolor{green}{1} & \textcolor{red}{1} & 0 & 0 & \textcolor{brown}{1} & \textcolor{yellow}{1} & 0 
\end{bmatrix}.
\] Consider the parity-check matrix $\mathbf{H}_1$ of a binary $t=1$-error-correcting BCH code of length $r=7$ given by 
 \[
         \mbf{H}_1 =\begin{bmatrix}
         1 & \alpha & \cdots & \alpha^6
         \end{bmatrix}=
         \begin{bmatrix}
0 & 0  & 1 & 0 & 1  & 1 & 1\\
 0  & 1 & 0 & 1  & 1 & 1 & 0\\
1 & 0 & 0 & 1 & 0 & 1 & 1
\end{bmatrix},
\] where $\alpha \in \mathbb{F}_{2^3}$ is a root of the primitive polynomial ${\alpha^3+\alpha+1=0}$. The signature matrix ${\mathbf{U}=[\mathbf{1}_{1\times 7}^{\mathsf{T}},\mathbf{H}_1^{\mathsf{T}}]^{\mathsf{T}}}$ is then given by
\[
\mbf{U} = \begin{bmatrix}
 \textcolor{blue}{1} & \textcolor{lime}{1}  & \textcolor{orange}{1} &  \textcolor{green}{1} & \textcolor{red}{1}  & \textcolor{brown}{1} & \textcolor{yellow}{1}\\
\textcolor{blue}{0} & \textcolor{lime}{0}  & \textcolor{orange}{1} & \textcolor{green}{0} & \textcolor{red}{1}  & \textcolor{brown}{1} & \textcolor{yellow}{1}\\
\textcolor{blue}{0}  & \textcolor{lime}{1} & \textcolor{orange}{0} & \textcolor{green}{1}  & \textcolor{red}{1} & \textcolor{brown}{1} & \textcolor{yellow}{0}\\
\textcolor{blue}{1} & \textcolor{lime}{0} & \textcolor{orange}{0} & \textcolor{green}{1} & \textcolor{red}{0} & \textcolor{brown}{1} & \textcolor{yellow}{1}
\end{bmatrix}.
\] Following the construction procedure explained earlier, the testing matrix $\mathbf{A}$ is then given by
  \[
   \mathbf{A}= \begin{bmatrix}
 \textcolor{blue}{1} & 0 & \textcolor{lime}{1} & 0 & \textcolor{orange}{1} & 0 & \textcolor{green}{1} & 0 & \textcolor{red}{1} & 0 & \textcolor{brown}{1} & 0 & 0 & \textcolor{yellow}{1}  \\
 \textcolor{blue}{0} & 0 & \textcolor{lime}{0} & 0 & \textcolor{orange}{1} & 0 & \textcolor{green}{0} & 0 & \textcolor{red}{1} & 0 & \textcolor{brown}{1} & 0 & 0 & \textcolor{yellow}{1}\\
 \textcolor{blue}{0} & 0 & \textcolor{lime}{1} & 0 & \textcolor{orange}{0} & 0 & \textcolor{green}{1} & 0 & \textcolor{red}{1} & 0 & \textcolor{brown}{1} & 0 & 0 & \textcolor{yellow}{0}\\
 \textcolor{blue}{1} & 0 & \textcolor{lime}{0} & 0 & \textcolor{orange}{0} & 0 & \textcolor{green}{1} & 0 & \textcolor{red}{0} & 0 & \textcolor{brown}{1} & 0 & 0 & \textcolor{yellow}{1}\\ \hline
           
0 & \textcolor{blue}{1} & \textcolor{lime}{1} & 0 & 0 & \textcolor{orange}{1} & 0 & \textcolor{green}{1} & 0 & \textcolor{red}{1} & 0 & \textcolor{brown}{1} & 0 & \textcolor{yellow}{1}\\
0 & \textcolor{blue}{0} & \textcolor{lime}{0} & 0 & 0 & \textcolor{orange}{1} & 0 & \textcolor{green}{0} & 0 & \textcolor{red}{1} & 0 & \textcolor{brown}{1} & 0 & \textcolor{yellow}{1} \\
0 & \textcolor{blue}{0} & \textcolor{lime}{1} & 0 & 0 & \textcolor{orange}{0} & 0 & \textcolor{green}{1} & 0 & \textcolor{red}{1} & 0 & \textcolor{brown}{1} & 0 & \textcolor{yellow}{0}\\
0 & \textcolor{blue}{1} & \textcolor{lime}{0} & 0 & 0 & \textcolor{orange}{0} & 0 & \textcolor{green}{1} & 0 & \textcolor{red}{0} & 0 & \textcolor{brown}{1} & 0 & \textcolor{yellow}{1}\\ \hline 
           
 0 & \textcolor{blue}{1} & 0 & \textcolor{lime}{1} & 0 & \textcolor{orange}{1} & \textcolor{green}{1} & 0 & 0 & \textcolor{red}{1} & \textcolor{brown}{1} & 0 & \textcolor{yellow}{1} & 0\\
 0 & \textcolor{blue}{0} & 0 & \textcolor{lime}{0} & 0 & \textcolor{orange}{1} & \textcolor{green}{0} & 0 & 0 & \textcolor{red}{1} & \textcolor{brown}{1} & 0 & \textcolor{yellow}{1} & 0 \\
 0 & \textcolor{blue}{0} & 0 & \textcolor{lime}{1} & 0 & \textcolor{orange}{0} & \textcolor{green}{1} & 0 & 0 & \textcolor{red}{1} & \textcolor{brown}{1} & 0 & \textcolor{yellow}{0} & 0\\
 0 & \textcolor{blue}{1} & 0 & \textcolor{lime}{0} & 0 & \textcolor{orange}{0} & \textcolor{green}{1} & 0 & 0 & \textcolor{red}{0} & \textcolor{brown}{1} & 0 & \textcolor{yellow}{1} & 0\\\hline
           
\textcolor{blue}{1} & 0 & 0 & \textcolor{lime}{1} & \textcolor{orange}{1} & 0 & 0 & \textcolor{green}{1} & \textcolor{red}{1} & 0 & 0 & \textcolor{brown}{1} & \textcolor{yellow}{1} & 0\\
\textcolor{blue}{0} & 0 & 0 & \textcolor{lime}{0} & \textcolor{orange}{1} & 0 & 0 & \textcolor{green}{0} & \textcolor{red}{1} & 0 & 0 & \textcolor{brown}{1} & \textcolor{yellow}{1} & 0 \\
\textcolor{blue}{0} & 0 & 0 & \textcolor{lime}{1} & \textcolor{orange}{0} & 0 & 0 & \textcolor{green}{1} & \textcolor{red}{1} & 0 & 0 & \textcolor{brown}{1} & \textcolor{yellow}{0} & 0\\
\textcolor{blue}{1} & 0 & 0 & \textcolor{lime}{0} & \textcolor{orange}{0} & 0 & 0 & \textcolor{green}{1} & \textcolor{red}{0} & 0 & 0 & \textcolor{brown}{1} & \textcolor{yellow}{1} & 0\\
         \end{bmatrix}.
\]
\end{example}
\subsection{Decoding algorithm}\label{decoding}
Let the observation vector corresponding to the $i$-th right node be defined as 
\begin{equation}\label{eq:obsvec}
\mathbf{z}_i=[z_{i,1},z_{i,2},\cdots,z_{i,s}]^{\mathsf{T}}=\mathbf{A}_i\mathbf{x},\ \forall i\in\{1,\cdots,M\}.
\end{equation}
Note that  
${\mbf{z}_i=[y_{(i-1)s+1},\cdots,y_{is}]^{\mathsf{T}}}$.
\begin{definition}($t$-\emph{resolvable} right node)
A right node is called $t$-\emph{resolvable} if it is connected to $t$ or fewer defective items.
\end{definition}
The following lemma is useful for resolving the right nodes. (The proofs of all lemmas can be found in the appendix.)
\begin{lemma}\label{lem:complexity}
The proposed algorithm detects and resolves all the $t$-resolvable right nodes.% Moreover, for $t\leq 4$, the complexity of resolving each $t$-resolvable right node is $\mathcal{O}(\log r)$.
\end{lemma}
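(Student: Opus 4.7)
The plan is to show that from the local observation $\mathbf{z}_i=\mathbf{A}_i\mathbf{x}$ at any right node $i$ one can (i) determine the number of defective items among its $r$ neighbors, and (ii) when this number is at most $t$, identify exactly which of those neighbors are defective.

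First I would let $\mathbf{x}_i\in\{0,1\}^r$ denote the restriction of $\mathbf{x}$ to the $r$ left neighbors of right node $i$. By the block construction in~\eqref{eq:measureblock}, the observation factors cleanly as $\mathbf{z}_i=\mathbf{U}\mathbf{x}_i$. Since the first row of $\mathbf{U}$ is the all-ones vector $\mathbf{1}_{1\times r}$, the first coordinate $z_{i,1}=\mathbf{1}_{1\times r}\,\mathbf{x}_i$ equals the Hamming weight of $\mathbf{x}_i$, i.e., the number $d_i$ of defective items connected to right node $i$. Hence the decoder can test $t$-resolvability by simply checking whether $z_{i,1}\leq t$, which handles the \emph{detection} half of the lemma.

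Next, when $d_i\leq t$, the remaining $s-1=t\log_2(r+1)$ coordinates of $\mathbf{z}_i$ satisfy $[z_{i,2},\ldots,z_{i,s}]^{\mathsf{T}}=\mathbf{H}_t\mathbf{x}_i$ as an integer-valued matrix--vector product. I would then reduce each of these entries modulo $2$; the resulting binary vector is precisely the $\mathbb{F}_2$-syndrome of $\mathbf{x}_i$ with respect to the BCH parity-check matrix $\mathbf{H}_t$. By Definition~\ref{def:bch}, $\mathbf{H}_t$ is the parity-check matrix of a binary $t$-error-correcting code, so any two distinct binary vectors of weight at most $t$ have distinct $\mathbb{F}_2$-syndromes. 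Therefore the syndrome uniquely determines $\mathbf{x}_i$, and a standard BCH decoding routine (e.g.\ Berlekamp--Massey followed by a Chien search) recovers $\mathbf{x}_i$ from it. The support of $\mathbf{x}_i$ then gives exactly the set of defective neighbors of right node $i$, which \emph{resolves} the node.

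The only conceptual point worth being careful about---and the main obstacle if one wants to write out the details fully---is the passage from the integer-valued vector $\mathbf{H}_t\mathbf{x}_i\in\mathbb{Z}^{s-1}$ to the $\mathbb{F}_2$-syndrome. This step is where the $\mathbb{F}_2$-separability of $\mathbf{H}_t$ (rather than merely $\mathbb{R}$-separability) is essential, because it is precisely what guarantees that bounded-weight binary preimages are distinguishable from the mod-$2$ reduction and that classical BCH decoders are applicable. Once this reduction is justified, both detection and resolution of every $t$-resolvable right node follow immediately from the error-correcting property of BCH codes.
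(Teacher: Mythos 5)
Your proof is correct and follows essentially the same route as the paper's: the first (all-ones) row of $\mathbf{U}$ yields the count of defective neighbors for detection, and the remaining rows reduced modulo $2$ give the $\mathbb{F}_2$-syndrome of the weight-$\le t$ restriction $\mathbf{x}_i$, which standard BCH decoding inverts. The mod-$2$ reduction step you flag as the key subtlety is exactly the step the paper also relies on, so there is no gap.
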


The decoding algorithm performs in rounds as follows. In each round, the decoding algorithm first iterates through all the right node observation vectors $\{\mathbf{z}_i\}_{i=1}^M$, and resolves all $t$-resolvable right nodes (by BCH decoding, as discussed in the proof of Lemma~\ref{lem:complexity}). Then, given the identities of the recovered left nodes, the edges connected to these defective items are peeled off the graph. That is, the contributions of the recovered defective items will be removed from the unresolved right nodes so that new right nodes may become $t$-resolvable for the next round. The decoding algorithm terminates when there is no more $t$-resolvable right nodes.
\begin{example}
Consider the group testing problem in the Example~\ref{ex:example1}. Let the number of defective items be $K=3$ and let $\mathbf{x}=[1,0,0,1,0,0,0,0,0,1,0,0,0,0]^T$, i.e., item $1$, item $4$, and item $10$ are defective items. We show how the proposed scheme can identify the defective items. The result of the tests can be expressed as follows,
\begin{align*}
   \mathbf{y}= \begin{bmatrix}
            \mathbf{z}_1 \\
           \mathbf{z}_2\\
            \mathbf{z}_3 \\
           \mathbf{z}_4\\
    \end{bmatrix}=\mathbf{A}\mathbf{x}=
    \begin{bmatrix}
   \mathbf{u}_1  \\
   \mathbf{u}_{5} \\
   \mathbf{u}_{2}+\mathbf{u}_{5} \\
\mathbf{u}_{1}+\mathbf{u}_{2} \\
    \end{bmatrix}
\end{align*}
Then, the right-node observation vectors are given by
\[ \mathbf{z}_1=\mathbf{u}_1=[1,0,0,1]^{\mathsf{T}}\]
\[ \mathbf{z}_2=\mathbf{u}_{5}=[1,1,1,0]^{\mathsf{T}}\]
\[ \mathbf{z}_3=\mathbf{u}_{2}+\mathbf{u}_{5}=[2,1,2,0]^{\mathsf{T}}\]
\[ \mathbf{z}_4=\mathbf{u}_{1}+\mathbf{u}_{2}=[2,0,1,1]^{\mathsf{T}}\]

 Because the signature matrix is built using a {$1$-separable} matrix, each right node can be resolved if it is connected to at most one defective item.
 
Iteration $1$: we first find the {$1$-resolvable} right nodes. The first and second right nodes are $1$-resolvable because $z_{1,1}=z_{2,1}=1$. Using a BCH decoding algorithm, one can find that the defective items connected to the first and second right nodes are item $1$ and item $10$, respectively. Next, we remove the contributions of the items $1$ and $10$ from the unresolved right nodes. The new observation vectors will be as follows,
\[ \mathbf{z}_3=\mathbf{u}_{2}=[1,0,1,0]^{\mathsf{T}}\]
\[ \mathbf{z}_4=\mathbf{u}_{2}=[1,0,1,0]^{\mathsf{T}}\]

Iteration $2$: it can be easily observed that the third and forth right nodes are {$1$-resolvable} since $z_{3,1}=z_{4,1}=1$. Using a BCH decoding algorithm, it follows that the item $4$ is the defective item connected to both right nodes $3$ and $4$. Since all the $K=3$ defective items are identified, the decoding algorithm terminates. 
\end{example}

\section{Main Results}\label{sec:main}
In this section, we present our main results. Theorem~\ref{theo:main1} characterizes the required number of tests that guarantees the identification of all defective items with probability approaching one as $K,N\rightarrow\infty$. Theorem~\ref{thm:main2} presents the computational complexity of the proposed algorithm. The proofs of Theorems~\ref{theo:main1} and~\ref{thm:main2} are given in Section~\ref{sec:proofs}.

\begin{theorem}\label{theo:main1}
For the sub-linear regime, the proposed scheme recovers all defective items with probability approaching one (as $K,N\rightarrow\infty$) with at most ${m=c(t)K\left(t\log_2\left(\frac{\ell N}{c(t)K}+1\right)+1\right)+1}$ tests, where $c(t)$ depends only on $t$. Table~\ref{Table:constant} shows the values of $c(t)$ for $t\leq 8$. %The computational complexity of the decoding algorithm for $t\leq 4$ is $\mathcal{O}(K\log \frac{N}{K})$.
\begin{table}[h]
\centering
\scriptsize{
\begin{tabular}{| c | c | c | c |c | c | c | c | c |}
\hline
$t$ & 1 & 2 & 3 & 4 & 5 & 6  & 7 & 8\\ \hline
$c(t)$ & 1.222 & 0.597 & 0.388 & 0.294 & 0.239 & 0.202 & 0.176 & 0.156\\ \hline
 $\ell^{\star}$ & 3 & 2 & 2 & 2 & 2 & 2 & 2 & 2 \\ \hline
\end{tabular}}
\caption{\small{The function $c(t)$ and the optimal left degree $\ell^{\star}$.}}
\label{Table:constant}
\end{table}
\end{theorem}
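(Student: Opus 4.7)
The plan is to analyze the peeling decoder via density evolution on the random bipartite graph $G_{\ell,r}(N,M)$, and then upgrade the resulting asymptotic prediction to a high-probability exact-recovery statement via concentration and an expander-style finishing argument. A first simplification is that only the defective left nodes drive the decoding dynamics: by Lemma~\ref{lem:complexity} a right node contributes to a given round iff it sees at most $t$ \emph{unresolved} defective items, so I would restrict attention to the subgraph induced by the $K$ defective left nodes together with their $\ell K$ edges and the $M$ right nodes.

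Next I would set up the density evolution recursion. Let $p_j$ denote the probability that a uniformly random edge emanating from a defective left node has its left endpoint still unresolved after $j$ rounds of peeling, with $p_0 = 1$. Fixing an edge $e=(v,u)$ from defective $v$ to right $u$, the number of \emph{other} defective neighbors of $u$ is $\mathrm{Bin}(r-1,K/N)$; in the sub-linear regime, with $\lambda := rK/N$ held bounded as $N/K \to \infty$, this is well approximated by $\mathrm{Poisson}(\lambda)$, and by independence each such neighbor is itself unresolved with probability $p_j$, so the count of unresolved defective other-neighbors of $u$ is $\mathrm{Poisson}(\lambda p_j)$. Node $u$ resolves $v$ iff this count is at most $t-1$, while $v$ itself remains unresolved iff none of its $\ell$ right neighbors resolves it, giving
\begin{equation*}
p_{j+1} \;=\; \Bigl[\,1 \;-\; \sum_{i=0}^{t-1} \frac{(\lambda p_j)^i}{i!}\,e^{-\lambda p_j}\,\Bigr]^{\ell}.
\end{equation*}
The density evolution threshold $\lambda^\star(t,\ell)$, defined as the largest $\lambda$ for which this one-dimensional map converges to $0$ from $p_0=1$, is then obtained by a standard fixed-point analysis of the right-hand side.

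Translating the threshold into the advertised test count is then routine. From $N\ell = Mr$ and $\lambda = rK/N$ one has $M = \ell K/\lambda$, so operating at the threshold gives $M = c\,K$ with $c = \ell/\lambda^\star(t,\ell)$; since each right node carries $s = t\log_2(r+1)+1$ tests, the total is $m = Ms+1 = c\,K\bigl(t\log_2(r+1)+1\bigr)+1$, and substituting $r = \ell N/M = \ell N/(cK)$ reproduces the stated formula with $c(t) := \min_{\ell\geq 2} \ell/\lambda^\star(t,\ell)$. Numerically optimizing over $\ell$ for each $t$ then yields the entries of Table~\ref{Table:constant}, with $\ell^\star=3$ at $t=1$ and $\ell^\star=2$ for $t\geq 2$.

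The main obstacle, and the most delicate step, will be upgrading the density evolution prediction into an \emph{exact}-recovery claim with probability $1-o(1)$. I would do this in two stages, standard for peeling-style sparse-graph decoders. First, a concentration argument (Wormald's differential equation method, or equivalently a bounded-differences martingale on the number of unresolved defectives after a fixed number of rounds) shows that after $O(1)$ rounds all but an arbitrarily small fraction $\varepsilon K$ of defective left nodes has been peeled, with high probability. Second, an expander argument on $G_{\ell,r}(N,M)$ rules out any non-empty stopping set among the surviving defectives: with high probability, every non-empty set $S$ of defective left nodes with $|S|\leq \varepsilon K$ has at least one right neighbor incident to no other unresolved defective, via a union bound over $|S|$ from $1$ up to $\varepsilon K$ using the independence in the random-graph construction. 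Executing this second stage cleanly—choosing $\varepsilon$ small enough that the union bound closes for the chosen $(\ell,r)$ while respecting the Poisson approximation for the right-side degrees—is the most technical part of the argument and is what promotes partial to exact recovery.
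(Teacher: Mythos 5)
Your overall route is the same as the paper's: prune the graph to the $K$ defective left nodes, approximate the right degree distribution by a Poisson with parameter $\lambda = rK/N = \ell K/M$, run density evolution to locate a threshold $\lambda_T(\ell)$, and set $c(t)=\min_{\ell\ge 2}\ell/\lambda_T(\ell)$ with $m=Ms+1$. However, your density-evolution recursion contains a genuine error: you raise the bracket to the power $\ell$, whereas the correct edge-perspective recursion (the paper's Lemma~\ref{lem:DenEvol}, equation~\eqref{eq:det-denevol2}) uses $\ell-1$. The quantity iterated in density evolution is an edge message --- the probability that the left endpoint of a random edge is unresolved as seen along that edge --- and this is the probability that none of the \emph{other} $\ell-1$ right neighbors has resolved it; the full exponent $\ell$ appears only once, at the very end, to convert the converged edge message into a node error probability, not inside the iteration. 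Since $[1-Q]^{\ell}\le[1-Q]^{\ell-1}$ for $Q\in[0,1]$, your map converges to zero for a strictly larger range of $\lambda$, so your $\lambda^\star(t,\ell)$ is too large and your $c(t)=\ell/\lambda^\star$ too small: the numbers you would compute do not reproduce Table~\ref{Table:constant}, which is part of the claim being proved. A second, smaller gap: you cannot operate ``at the threshold'' $M=cK$, since $\lambda_T$ is a supremum and convergence is guaranteed only for $\lambda<\lambda_T$ strictly; the paper takes $M=c(t)K\beta$ with $\beta>1$ and uses Lemma~\ref{lem:lem3} to show that this slack is absorbed by the trailing $+1$ in $m$.

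On the other hand, your final stage --- Wormald/martingale concentration of the peeling trajectory followed by a stopping-set (expansion) argument over all sets of at most $\varepsilon K$ surviving defectives --- is a genuinely more careful treatment than the paper's own proof, which stops at $\lim_{j\to\infty}p_j=0$ (by itself only a statement about the expected \emph{fraction} of unidentified defectives) and silently upgrades this to exact recovery of all defectives. One correction to your stopping-set condition for general $t$: a set $S$ of surviving defectives is un-peelable only if every right node adjacent to $S$ has at least $t+1$ neighbors in $S$; requiring a right neighbor ``incident to no other unresolved defective'' is the $t=1$ special case.
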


\begin{theorem}\label{thm:main2}
The encoding and decoding of the proposed algorithm for any $t\leq 4$ have the computational complexity of $\mathcal{O}(K\log^2 \frac{N}{K})$ and $\mathcal{O}(K\log \frac{N}{K})$, respectively.
\end{theorem}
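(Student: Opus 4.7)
The plan is to separately bound the online encoding cost (computing $\mathbf{y}=\mathbf{A}\mathbf{x}$ given $\mathbf{x}$) and the decoding cost (peeling recovery of $\mathbf{x}$ from $\mathbf{y}$), in both cases exploiting the sparsity of $\mathbf{A}$ and the structure of the BCH signatures. Under the parameter choice of Theorem~\ref{theo:main1}, the relevant scalings are $M=\Theta(K)$, $r=\Theta(N/K)$, and $s=t\log_2(r+1)+1=\Theta(\log(N/K))$, so the bipartite graph and the signature matrix $\mathbf{U}$ can be treated as part of the (offline) design; only the online, per-instance work needs to be counted.

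For encoding, rather than materialize $\mathbf{A}$ I would iterate over the $K$ coordinates where $\mathbf{x}$ is nonzero. For each defective item I visit its $\ell$ right neighbors (constant time per lookup via adjacency lists) and accumulate the corresponding signature column $\mathbf{u}_j$ into the integer observation $\mathbf{z}_i$. Each of the $s$ coordinates of $\mathbf{z}_i$ takes values in $\{0,1,\ldots,r\}$, so its integer representation is $\Theta(\log r)=\Theta(\log(N/K))$ bits, and a coordinate-wise integer increment costs $\mathcal{O}(\log(N/K))$ bit operations in the worst case. Aggregating over the $K\cdot\ell\cdot s$ coordinate operations gives encoding cost $\mathcal{O}(K\log^2(N/K))$.

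For decoding, I would maintain an event-driven queue of $t$-resolvable right nodes, initialized from the first-coordinate counters $\{z_{i,1}\}_{i=1}^M$. The work splits into BCH decoding at right nodes that become $t$-resolvable and peeling updates. Whenever $z_{i,1}\leq t$, the reductions $z_{i,2}\bmod 2,\ldots,z_{i,s}\bmod 2$ form the BCH syndromes, and Lemma~\ref{lem:complexity} guarantees BCH decoding correctly recovers the (at most $t$) incident defective items. For $t\leq 4$ the error-locator polynomial has degree at most $4$ and its roots over $\mathbb{F}_{2^m}$ can be obtained in a constant number of field operations by closed-form root finding (half-trace for $t=2$, Cardano-type formulas for $t=3,4$); combined with a precomputed discrete-logarithm table for position recovery, each BCH decoding at a right node costs $\mathcal{O}(m)=\mathcal{O}(\log(N/K))$ bit operations, contributing $\mathcal{O}(K\log(N/K))$ over the $\mathcal{O}(K)$ right nodes. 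Each of the $K$ defectives is peeled exactly once and triggers $\ell-1$ right-node updates consisting of an amortized $\mathcal{O}(1)$ decrement of the integer counter $z_{i,1}$ (since each counter only decreases, totaling $\mathcal{O}(K)$ decrement steps) and XOR-style updates of the $s-1$ mod-$2$-reduced coordinates in $\mathbb{F}_2$; this contributes $\mathcal{O}(s)=\mathcal{O}(\log(N/K))$ per peeled item, or $\mathcal{O}(K\log(N/K))$ in total. The event-driven queue ensures each right node is reinspected only when one of its neighbors is peeled, so bookkeeping stays in $\mathcal{O}(K)$. Summing these contributions yields the claimed decoding cost $\mathcal{O}(K\log(N/K))$.

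The main obstacle is the $\mathcal{O}(\log(N/K))$ bound for BCH decoding per right node, and this is precisely where the restriction $t\leq 4$ enters: for $t\geq 5$ the error-locator polynomial has degree $\geq 5$ and no general closed-form root extraction over $\mathbb{F}_{2^m}$ is available, forcing one to invoke generic methods such as an exhaustive Chien search at cost $\Theta(r)=\Theta(N/K)$ per decoding, which would inflate the overall decoding complexity from $\mathcal{O}(K\log(N/K))$ to $\Omega(N)$.
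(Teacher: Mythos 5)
Your proposal is correct, and on the decoding side it is essentially the paper's own argument: Lemma~\ref{lem:complexity2} likewise bounds the cost of resolving one $t$-resolvable right node by $\mathcal{O}(\log r)$ via Berlekamp--Massey for the error-locator polynomial followed by direct (closed-form) root extraction for degree at most $4$, invoked for exactly the reason you give, namely that a Chien search costing $\Theta(tr\log r)$ must be avoided; the theorem then follows by multiplying over the $M=\mathcal{O}(K)$ right nodes and using \eqref{eq:connections} to get $r=\mathcal{O}(N/K)$. Your event-driven queue and the amortized accounting of the peeling updates are a more explicit implementation of the same count, at the same level of rigor as the paper (both of you gloss over the exact bit-cost of the quadratic/cubic root formulas over $\mathbb{F}_{2^m}$, which is harmless here). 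On the encoding side, however, you take a genuinely different and in fact tighter route. The paper charges $r$ summations to each of the $m$ rows and asserts that $\mathcal{O}(mr)$ equals $\mathcal{O}(K\log^2\frac{N}{K})$; but with $m=\Theta\bigl(K\log\frac{N}{K}\bigr)$ and $r=\Theta\bigl(\frac{N}{K}\bigr)$ one has $mr=\Theta\bigl(N\log\frac{N}{K}\bigr)$, which exceeds the claimed bound throughout the sub-linear regime. Your column-wise accounting --- exploiting the $K$-sparsity of $\mathbf{x}$ and charging $\ell s$ coordinate updates of $\mathcal{O}\bigl(\log\frac{N}{K}\bigr)$-bit counters to each defective item --- is what actually delivers $\mathcal{O}\bigl(K\ell s\log\frac{N}{K}\bigr)=\mathcal{O}\bigl(K\log^2\frac{N}{K}\bigr)$, so your encoding argument not only differs from the paper's but supplies the derivation the stated bound actually needs.
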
 

%\subsection{Analysis}\label{sec:det}

\section{Proofs of Main Theorems}\label{sec:proofs}
\subsection{Proof of Theorem~\ref{theo:main1}}
Let $N$ be the total number of items, out of which $K$ items are defective. Note that in the QGT problem, performing one initial test (on all items) would suffice to obtain the number of defective items. %Thus, We perform analysis and find the required number of tests for the worst case scenario, i.e., number of the defective items is $K$. Thus, an upper bound on the required number of tests  would be provided. 
As mentioned in Section~\ref{decoding}, our scheme employs an iterative decoding algorithm. In each iteration, the algorithm finds and resolves all the $t$-\emph{resolvable} right nodes. At the end of each iteration, the decoder subtracts the contribution of the identified defective items from the unresolved right nodes. This process is repeated until there is no $t$-\emph{resolvable} right nodes left in the graph. The fraction of defective items that remain unidentified when the decoding algorithm terminates can be analyzed using density evolution as follows. 

Assuming that the exact number of the defective items, $K$, is known and the values assigned to the defective and non-defective items are one and zero, respectively, the left-and-right-regular bipartite graph can be pruned. All the zero left nodes and their respective edges are removed from the graph. The number of left nodes in the pruned graph is $K$, but the degree of these nodes remains unchanged. On the other hand, the number of right nodes remains unchanged, but the resulting graph is not right-regular any longer. 

Let $\lambda$ be the average right degree, i.e., $\lambda=\frac{K\ell}{M}$. Let $\rho(x)\triangleq \sum_{i=1}^{\min(K,r)}\rho_ix^{i-1}$ be the right edge degree distribution, where $\rho_i$ is the probability that a randomly picked edge in the pruned graph is connected to a right node of degree $i$, and $\min(K,r)$ is the maximum degree of a right node. As shown in~\cite{DBLP:journals/corr/VemJN17}, as $K,N\rightarrow\infty$, we have $\rho_i=e^{-\lambda}\frac{\lambda^{i-1}}{(i-1)!}$.
 
The following lemma is useful for computing the fraction of unidentified defective items at each iteration $j$ of the decoding algorithm.
 \begin{lemma}\label{lem:DenEvol}
 Let $p_j$ be the probability that a randomly chosen defective item is not recovered at iteration $j$ of the decoding algorithm; and let $q_j$ be the probability that a randomly picked right node is resolved at iteration $j$ of the decoding algorithm. The relation between $p_j$ and $p_{j+1}$ is determined by the following density evolution equations: 
\begin{equation}\label{eq:det-denevol1}
q_j=\sum_{i=1}^{t}\rho_i+
    \sum_{i=t+1}^{\min(K,r)}\rho_i\sum_{k=0}^{t-1}{i-1 \choose k}p_j^k(1-p_j)^{i-k-1},
    \end{equation}
    \begin{equation}\label{eq:det-denevol2}
    p_{j+1}=(1-q_j)^{\ell-1},
 \end{equation}
where $t$ is the level of separability, and $\rho_i$ is the probability that a randomly picked edge in the pruned graph is connected to a right node of degree $i$. 
 \end{lemma}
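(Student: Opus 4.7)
The plan is to follow the standard density evolution analysis for sparse graph codes over the locally tree-like random bipartite graph $G_{\ell,r}(N,M)$. The two equations correspond to the right-to-left and left-to-right message updates in the peeling decoder, analyzed from the edge perspective. The starting point is the observation, already exploited just before the lemma statement, that after pruning the graph to its $K$ defective left nodes the right degrees become approximately Poisson, so $\rho_i = e^{-\lambda}\lambda^{i-1}/(i-1)!$ with $\lambda = K\ell/M$.

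First, I would invoke the locally tree-like property: for the sparse random bi-regular bipartite graph, the depth-$d$ neighborhood of a uniformly chosen edge is a tree with probability tending to $1$ as $K,N\to\infty$, for any fixed $d$. On this tree, the messages incoming to a node along its various edges are mutually independent, which is what justifies plugging the same $p_j$ (resp.\ $q_j$) into several factors below.

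Next, I would derive equation~\eqref{eq:det-denevol1}. Fix a uniformly chosen edge $e$ of the pruned graph and let $c$ be the right node at one of its endpoints; condition on $c$ having degree $i$, which happens with probability $\rho_i$. By Lemma~\ref{lem:complexity}, $c$ is resolved at iteration $j$ exactly when it is $t$-resolvable at that iteration, i.e.\ when at most $t$ of its $i$ neighbors are still unrecovered. Since the left endpoint of $e$ is one of these neighbors, and is treated as unrecovered in this edge-perspective computation, $c$ is resolvable iff at most $t-1$ of its $i-1$ \emph{other} neighbors are unrecovered. For $i\leq t$ this is automatic, contributing $\sum_{i=1}^{t}\rho_i$. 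For $i>t$ the tree-like independence makes the number of unrecovered other neighbors a $\mathrm{Bin}(i-1,p_j)$ random variable, yielding $\sum_{k=0}^{t-1}\binom{i-1}{k}p_j^{k}(1-p_j)^{i-1-k}$. Summing over $i$ weighted by $\rho_i$ gives equation~\eqref{eq:det-denevol1}.

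Then I would derive equation~\eqref{eq:det-denevol2}. Along the same edge $e$, let $v$ be the defective left node at the other endpoint. Node $v$ fails to be recovered by iteration $j+1$ iff none of its other $\ell-1$ right neighbors was resolved by iteration $j$. By the tree-like independence, those $\ell-1$ events are independent, each occurring with probability $q_j$, so $p_{j+1}=(1-q_j)^{\ell-1}$. Together with the initial condition $p_0=1$, a straightforward induction on $j$ closes the recursion. The main technical obstacle is the rigorous justification of the asymptotic independence between the messages along different edges sharing a common node; this is exactly what the tree-like property of $G_{\ell,r}(N,M)$ provides, and it mirrors the standard density-evolution arguments in the LDPC literature. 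The only ingredient specific to our setting is the replacement of the classical BP update at the right nodes by the $t$-resolvability criterion from Lemma~\ref{lem:complexity}, which is what injects the truncation of the binomial sum at $k=t-1$ in equation~\eqref{eq:det-denevol1}.
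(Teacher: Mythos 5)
Your proposal is correct and follows essentially the same route as the paper: an edge-perspective density evolution over the locally tree-like neighborhood of Fig.~\ref{fig:denevol}, with the right-node update truncated at $t-1$ unrecovered other neighbors (via the $t$-resolvability criterion of Lemma~\ref{lem:complexity}) and the left-node update $(1-q_j)^{\ell-1}$ from the independence of the $\ell-1$ incoming messages. The only differences are cosmetic: you make the tree-like independence assumption explicit, and you write the initial condition as $p_0=1$ where the paper uses $p_1=1$.
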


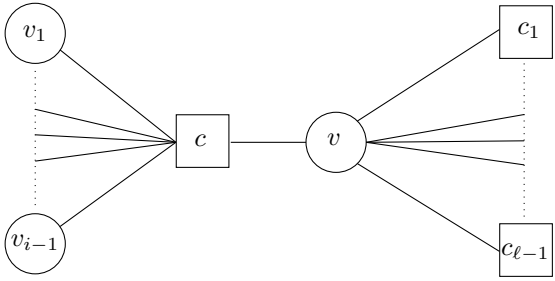
\begin{figure}\hspace{0.5cm}
\begin{tikzpicture}
  \coordinate (O1) at (0,-0.7);
  \coordinate (O2) at (0,-3.5);
  \coordinate (O3) at (4,-2.15);
  \draw (O1) circle (0.4);
  \draw (O2) circle (0.4);
  \draw (O3) circle (0.4);
  \draw[black] (2.2,-1.9) ++(-.65*0.5, -7/6*0.5) rectangle ++(0.7,0.7);
   \draw[black] (6.5,-0.45) ++(-.65*0.5, -7/6*0.5) rectangle ++(0.7,0.7);
   \draw[black] (6.5,-3.35) ++(-.65*0.5, -7/6*0.5) rectangle ++(0.7,0.7);
  \draw [dotted] (0,-1.2) -- (0,-3);
  \draw [dotted] (6.5,-1.1) -- (6.5,-3.2);
  \path[black] (-90:0.7) node []{$v_1$};
  \path[black] (-90:3.5) node []{$v_{i-1}$};
  \path[black] (-44:3.05) node []{$c$};
  \path[black] (-28:4.5) node []{$v$};
  \path[black] (-5.75:6.6) node []{$c_1$};
  \path[black] (-29:7.45) node []{$c_{\ell-1}$};
  \draw (0.33,-0.92) -- (1.87,-2.15);
  \draw (0.33,-3.27) -- (1.87,-2.15);
  
  \draw (0,-1.71) -- (1.87,-2.15);
  \draw (0,-2.05) -- (1.87,-2.15);
  \draw (0,-2.4) -- (1.87,-2.15);

  \draw (4.4,-2.15) -- (6.5,-2.13);
  \draw (4.4,-2.15) -- (6.5,-2.45);
  \draw (4.4,-2.15) -- (6.5,-1.78);

  \draw (2.6,-2.15) -- (3.6,-2.15);
  \draw (4.283,-1.867) -- (6.17,-0.65);
  \draw (4.283,-2.433) -- (6.17,-3.6);
\end{tikzpicture}
\caption{Tree-like representation of neighborhood of the edge between a left node $v$ and a right node $c$ in the pruned graph.}\label{fig:denevol}
\end{figure}

Note that $p_j$ is only a function of the variables $t$, $\ell$, and $\lambda$ when $\min(K,r)\rightarrow \infty$. Recall that the goal is to minimize the total number of tests, i.e., ${M\times s}$, where $M$ is the number of right nodes, and $s$ is the number of rows in the signature matrix. The number of rows, $s$, in the signature matrix depends only on the level of separability, $t$. For a given $t$, we can minimize the number of right nodes ${M=\frac{\ell}{\lambda}K}$ subject to the constraint ${\lim_{j\rightarrow \infty }p_j(\ell,\lambda)=0}$, so as to minimize the total number of the tests. The constraint $ \lim_{j\rightarrow \infty }p_j(\ell,\lambda)=0$ guarantees that running the decoding algorithm for sufficiently large number of iterations, the probability that a randomly chosen defective item remains unidentified approaches zero. For any ${\ell \geq 2}$, let  ${\lambda_T(\ell) \triangleq \sup\{\lambda: \lim_{j\rightarrow \infty }p_j(\ell,\lambda)=0\}}$. Then, for any ${\ell \geq 2}$ and $\lambda < \lambda_T(\ell)$, we have ${\lim_{j\rightarrow \infty }p_j(\ell,\lambda)=0}$. Accordingly, for any ${\ell \geq 2}$ and $M=\frac{\ell}{\lambda}K > \frac{\ell}{\lambda_T(\ell)}K$, it follows that $\lim_{j\rightarrow \infty }p_j(\ell,\lambda)=0$. Our goal is then to compute

\begin{equation}\label{eq:opt}
    \min_{\ell \in \{2,3,\dots\}}\frac{\ell}{\lambda_T(\ell)}K.
\end{equation}
We can solve this problem numerically and attain the optimal value of $\ell$, i.e., $\ell^{\star}$. Let ${c(t)\triangleq \frac{\ell^{\star}}{\lambda_T(\ell^{\star})}}$. The number of right nodes can then be chosen as ${M=c(t)K\beta}$ for any $\beta> 1$ to guarantee that ${M > c(t)K=\frac{\ell^{\star}}{\lambda_T(\ell^{\star})}K}$. Substituting ${M=c(t)K\beta}$ in \eqref{eq:connections} results in ${r=\frac{\ell N}{c(t)K\beta}}$. Therefore, the total number of tests will become ${M\times s=c(t)K\beta\left(t\log_2\left(\frac{\ell N}{c(t)K\beta}+1\right)+1\right)}$. 
\begin{lemma}\label{lem:lem3}
There exist some $\beta>1$ such that 
\begin{multline*}
 c(t)K\left(t\log_2\left(\frac{\ell N}{c(t)K}+1\right)+1\right)+1 \geq \\ c(t)K\beta\left(t\log_2\left(\frac{\ell N}{c(t)K\beta}+1\right)+1\right).
\end{multline*}
\end{lemma}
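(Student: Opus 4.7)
The plan is a short continuity argument. Define
\[
g(\beta)\defeq c(t)K\beta\left(t\log_2\left(\frac{\ell N}{c(t)K\beta}+1\right)+1\right), \qquad \beta>0,
\]
so that the target inequality reads exactly $g(1)+1\geq g(\beta)$ for some $\beta>1$. For every $\beta>0$ we have $\frac{\ell N}{c(t)K\beta}+1>0$, so $g$ is a composition of smooth operations (scalar multiplication, addition of positive constants, and $\log_2$ of a strictly positive quantity). Hence $g$ is continuous, in fact differentiable, on $(0,\infty)$, and in particular at $\beta=1$.

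Continuity at $\beta=1$ means that for every $\epsilon>0$ there exists $\delta=\delta(\epsilon;N,K,\ell,t)>0$ such that $|g(\beta)-g(1)|<\epsilon$ whenever $|\beta-1|<\delta$. Specializing to $\epsilon=1$ and picking any $\beta\in(1,1+\delta)$ produces $g(\beta)<g(1)+1$, which is precisely the claimed inequality (with strict inequality, which immediately implies the weak form stated in the lemma). If one wanted to be fully explicit, a first-order expansion of $g$ around $\beta=1$ would show that $g(\beta)-g(1)=O\bigl(K\log_2(N/K)\bigr)(\beta-1)$, so choosing, for instance, $\beta-1=\bigl(c(t)Kt\log_2(\ell N/(c(t)K)+1)+c(t)K\bigr)^{-1}$ already makes the excess at most one test.

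The conceptual role of the ``$+1$'' on the left-hand side is to absorb the infinitesimal slack forced by the requirement $\beta>1$: the density-evolution threshold condition $M>c(t)K=\frac{\ell^\star}{\lambda_T(\ell^\star)}K$ rules out $\beta=1$, but continuity of $g$ lets us push $\beta$ arbitrarily close to $1$ from above while paying at most one extra test. The only mildly subtle point is that $\delta$, and hence the chosen $\beta$, may depend on $N,K,\ell,t$; since the lemma existentially quantifies $\beta$ after those parameters are fixed, this dependence is allowed. I do not expect any genuine obstacle — no further density-evolution or combinatorial reasoning is needed beyond the continuity observation.
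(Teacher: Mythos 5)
Your proof is correct and takes essentially the same approach as the paper: both arguments observe that the right-hand side tends to the left-hand side minus $1$ as $\beta\to 1^{+}$, so a sufficiently small perturbation $\beta>1$ costs at most one extra test. The paper merely makes the continuity quantitative via monotonicity of $f(\beta)\triangleq c(t)K\bigl(t\log_2\bigl(\tfrac{\ell N}{c(t)K\beta}+1\bigr)+1\bigr)$, obtaining the explicit range $1<\beta\leq 1+1/f(1)$, which coincides with the explicit choice you note in your parenthetical remark.
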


By combining the result of Lemma~\ref{lem:lem3} and the preceding arguments, it follows that with probability approaching one as $K,N\rightarrow\infty$,  $m=c(t)K\left(t\log_2\left(\frac{\ell N}{c(t)K}+1\right)+1\right)+1$ tests would suffice for the proposed algorithm to recover all defective items. This completes the proof. 

\subsection{Proof of Theorem~\ref{thm:main2}}
\begin{lemma}\label{lem:complexity2}
For any $t\leq 4$, the computational complexity of resolving each $t$-resolvable right node is $\mathcal{O}(\log r)$.
\end{lemma}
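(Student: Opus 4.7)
The plan is to show that, at any $t$-resolvable right node, the task of identifying the defective items among its $r$ neighbors reduces exactly to syndrome decoding of a $t$-error-correcting binary BCH code of length $r$, and that for $t\leq 4$ this can be carried out with only a constant number of arithmetic operations in the field $\mathbb{F}_{2^m}$, where $m=\log_2(r+1)$. Since each field operation costs $\mathcal{O}(m)=\mathcal{O}(\log r)$ bit operations, the claim follows.

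First, I would observe that at a $t$-resolvable right node connected to defective items indexed by $\{j_1,\dots,j_k\}\subseteq\{1,\dots,r\}$ with $k\leq t$, the observation vector takes the form $\mathbf{z}_i=\mathbf{u}_{j_1}+\cdots+\mathbf{u}_{j_k}$ with the sum taken over the integers. The first coordinate $z_{i,1}=k$ reveals the number of defectives at no cost, and reducing the remaining $s-1=t\log_2(r+1)$ integer coordinates modulo two yields $\mathbf{h}_{j_1}\oplus\cdots\oplus\mathbf{h}_{j_k}$, which is precisely the $\mathbb{F}_2$-syndrome of the underlying binary BCH code with respect to an error pattern supported on $\{j_1,\dots,j_k\}$. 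This reduction takes $\mathcal{O}(t\log_2 r)=\mathcal{O}(\log r)$ bit operations.

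Next, I would invoke a small-$t$ BCH decoder such as the Peterson--Gorenstein--Zierler algorithm, which, from the $t$ syndromes viewed as elements of $\mathbb{F}_{2^m}$, produces the error-locator polynomial $\Lambda(x)=\prod_{i=1}^{k}(1-\alpha^{j_i}x)$ of degree $k\leq t$ in $\mathcal{O}(t^3)=\mathcal{O}(1)$ field operations. Since $t\leq 4$, the roots of $\Lambda$ can then be extracted in closed form by reducing a quartic to a resolvent cubic, the cubic to a resolvent quadratic, and solving characteristic-two quadratics $x^2+x+a=0$ via a single trace evaluation---all within a constant number of field operations. Converting each root $\alpha^{-j_i}$ back to the position index $j_i$ amounts to a discrete logarithm in $\mathbb{F}_{2^m}$, which I would handle by a one-time global precomputation of a lookup table of size $r$ shared across all right nodes; each subsequent lookup is $\mathcal{O}(1)$, and there are at most $t=\mathcal{O}(1)$ such lookups per node.

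The main obstacle is precisely the root-finding step, which is exactly where the hypothesis $t\leq 4$ enters. The standard Chien search evaluates $\Lambda$ at all $r$ powers of $\alpha$ and costs $\Theta(r)$ per node, which would destroy the bound. The argument therefore leans essentially on the classical fact that polynomials of degree at most four admit closed-form factorization into linear factors via resolvent constructions (adapted to characteristic two for the quadratic step), a property that genuinely fails from degree five onward and consequently prevents the same proof strategy from extending to $t\geq 5$.
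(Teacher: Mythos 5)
Your proof is correct and follows essentially the same route as the paper: reduce the observation vector modulo $2$ to a BCH syndrome, compute the degree-$\leq t$ error-locator polynomial in $\mathcal{O}(1)$ field operations, and exploit $t\leq 4$ to replace the $\Theta(r)$ Chien search with closed-form root extraction (the paper cites the same characteristic-two quartic/cubic/quadratic formulas you describe), for a total of $\mathcal{O}(\log r)$ per right node. Your use of Peterson--Gorenstein--Zierler instead of Berlekamp--Massey and your explicit treatment of the root-to-index lookup table are immaterial variations.
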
  

%In Lemma~\ref{lem:complexity} it has been shown that for $t\leq 4$, the time complexity of resolving each $t$-resolvable right node is $\mathcal{O}(\log r)$. 
The total number of right nodes, $M$, is $\mathcal{O}(K)$. From Lemma~\ref{lem:complexity2}, it then follows that the complexity of the decoding algorithm is $\mathcal{O}(K\log r)$. Using \eqref{eq:connections}, it is easy to see that for any $t\leq 4$ the decoding algorithm has complexity $\mathcal{O}(K\log \frac{N}{K})$. The total number of measurements is $m$ and for each measurement $r$ summations are performed. Hence, the complexity of the encoding algorithm is $\mathcal{O}(mr)$, which becomes equivalent to $\mathcal{O}(K\log^2 \frac{N}{K})$ for any $t\leq 4$.

\section{Evaluation of $c(t)$}
In this section, we present the complete analysis for the case of $t=1$, and show how one can evaluate $c(t)$ at $t=1$, i.e., $c(1)$. The same procedure can be used for evaluating $c(t)$ at any $t>1$. 

To compute ${c(1)= \frac{\ell^{\star}}{\lambda_T(\ell^{\star})}}$, we compute the ratio $\frac{\ell}{\lambda_T(\ell)}$ for each $\ell\geq 2$ and its corresponding $\lambda_T(\ell)$. The optimal $\ell$, i.e., $\ell^{\star}$, is the one that yields the minimum value for $\frac{\ell}{\lambda_T(\ell)}$. 

For the case of $t=1$, the density evolution equations~\eqref{eq:det-denevol1} and~\eqref{eq:det-denevol2} can be combined as \begin{equation}\label{eq:pj1} p_{j+1}=\left(1-\sum_{i=1}^{\min(K,r)}\rho_i(1-p_j)^{i-1}\right)^{\ell-1}.
\end{equation} Obviously, $p_1=1$. Substituting ${\rho_i=e^{-\lambda}\frac{\lambda^{i-1}}{(i-1)!}}$, we can rewrite~\eqref{eq:pj1} as
\begin{equation}\label{eq:pj12}
p_{j+1}=\left(1-e^{-\lambda}\sum_{i=1}^{\min(K,r)}\frac{\lambda^{i-1}}{(i-1)!}(1-p_j)^{i-1}\right)^{\ell-1}.
\end{equation} For the sub-linear regime, $\frac{K}{N}\rightarrow 0$ (by definition) as $K,N\rightarrow \infty$, and hence, $r\rightarrow \infty$ (by~\eqref{eq:connections}). Thus, in the asymptotic regime of our interest, $\min(K,r)\rightarrow \infty$. Letting $\min(K,r)\rightarrow \infty$, the equation~\eqref{eq:pj12} reduces to
\begin{equation}\label{eq:t=1}
p_{j+1}=\left(1-e^{-\lambda p_j}\right)^{\ell-1}.
\end{equation}
Using~\eqref{eq:t=1}, we can write
\[
    {\lambda=  \left( \frac{\ln \left(1-p_{j+1}^{\frac{1}{\ell-1}}\right)}{-p_j}\right)}.
\]
%\begin{equation}\label{eq:den1sep}
%p_{j+1}=\left(1-e^{-\lambda %p_j}\right)^{\ell-1}.
%\end{equation}

% $p_{j+1}=(1-e^{-\lambda p_j})^{\ell-1}$ and 
The following two lemmas are useful for computing $\lambda_T(\ell) = \sup\{\lambda: \lim_{j\rightarrow \infty }p_j(\ell,\lambda)=0\}$ for each $\ell \geq 2$.
\begin{lemma}\label{lem:convergence}
For any $\ell \geq 2$ and any $\lambda > 0$, the infinite sequence $\{p_1,p_2,\cdots\}$ converges.
\end{lemma}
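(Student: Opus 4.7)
The plan is to verify convergence via the classical monotone convergence theorem, exploiting the fact that the recursion is a fixed iteration of a scalar function on $[0,1]$. Define
\[
f(p) \;\triangleq\; \bigl(1 - e^{-\lambda p}\bigr)^{\ell-1},
\]
so that the sequence from equation~\eqref{eq:t=1} satisfies $p_{j+1} = f(p_j)$ with $p_1 = 1$. The first step is to confirm that $f$ maps $[0,1]$ into itself, which is immediate since $1 - e^{-\lambda p} \in [0, 1-e^{-\lambda}] \subseteq [0,1]$ for $p \in [0,1]$, and raising to the non-negative power $\ell-1$ keeps the image in $[0,1]$. In particular, the entire sequence stays in $[0,1]$, so it is bounded.

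The second step is to show that $f$ is non-decreasing on $[0,1]$. Differentiating,
\[
f'(p) \;=\; (\ell-1)\,\bigl(1-e^{-\lambda p}\bigr)^{\ell-2}\,\lambda\, e^{-\lambda p} \;\geq\; 0
\]
for every $\ell \geq 2$, $\lambda > 0$, and $p \in [0,1]$, since all three factors are non-negative. Hence $f$ is monotonically non-decreasing.

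The third step is to establish monotonicity of the sequence. Because $p_1 = 1$ and $f(1) = (1-e^{-\lambda})^{\ell-1} \leq 1$, we have $p_2 \leq p_1$. Assuming inductively that $p_{j+1} \leq p_j$, applying the non-decreasing map $f$ to both sides yields $p_{j+2} = f(p_{j+1}) \leq f(p_j) = p_{j+1}$. Hence $\{p_j\}$ is a non-increasing sequence in $[0,1]$.

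The final step is the monotone convergence theorem: any non-increasing sequence of real numbers that is bounded below (here by $0$) converges. This is the entire argument; there is no real obstacle, and the technical content reduces to the two elementary facts that $f$ maps $[0,1]$ into itself and has non-negative derivative. If one wished to identify the limit $p_\infty$, it would necessarily be a fixed point of $f$, but the lemma as stated only requires existence of the limit, which the above chain of observations delivers.
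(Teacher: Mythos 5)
Your proof is correct and follows essentially the same route as the paper: both establish that the sequence is bounded in $[0,1]$ and monotonically non-increasing (by induction, using the monotonicity of the iteration map $p\mapsto(1-e^{-\lambda p})^{\ell-1}$), and then invoke the monotone convergence theorem. Your version merely makes explicit, via the derivative computation, the monotonicity of the map that the paper uses implicitly in its inductive step.
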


% Let $p_{j+1}=\left(1-e^{-\lambda p_j}\right)^{\ell-1}$ and $p_{1}=1$. 

\begin{lemma}\label{lem:limit}
Let $p^{*}$ be the limit of the sequence $\{p_1,p_2,\cdots\}$, and let \[{\lambda_T(\ell)\triangleq \displaystyle \inf_{0<x<1} \left( \frac{\ln(1-x^{\frac{1}{\ell-1}})}{-x}\right)}.\] Then, for any $\ell \geq 2$, we have 
\[
\begin{cases}
p^{*}=0, & \quad  0 < \lambda < \lambda_T(\ell),\\
p^{*}>0, &\quad \lambda \geq \lambda_T(\ell).\\
\end{cases}
\]
\end{lemma}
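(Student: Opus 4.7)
The plan is to analyze the iteration $p_{j+1} = f_\lambda(p_j)$ with $f_\lambda(p) \triangleq (1-e^{-\lambda p})^{\ell-1}$ and $p_1 = 1$, first locating the fixed points of $f_\lambda$, and then using the monotonicity of $f_\lambda$ and of the sequence itself to determine which fixed point the limit $p^{*}$ sits at.

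First I would record elementary properties of $f_\lambda$ on $[0,1]$: it is continuous, strictly increasing on $(0,1]$ (its derivative is $(\ell-1)(1-e^{-\lambda p})^{\ell-2}\lambda e^{-\lambda p} > 0$), $f_\lambda(0)=0$, and $f_\lambda(1)=(1-e^{-\lambda})^{\ell-1}\in(0,1)$. Since $p_2 = f_\lambda(1) < 1 = p_1$, a one-line induction using the monotonicity of $f_\lambda$ gives $p_{j+1}\leq p_j$ for all $j$. By Lemma~\ref{lem:convergence} the sequence converges, and passing to the limit in the recursion (using continuity of $f_\lambda$) shows that $p^{*}$ must be a fixed point of $f_\lambda$ lying in $[0,1)$.

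Next I would translate the fixed-point equation $p = (1-e^{-\lambda p})^{\ell-1}$ on $(0,1)$ into
\[
\lambda \;=\; g(p) \;\triangleq\; \frac{-\ln\bigl(1-p^{1/(\ell-1)}\bigr)}{p}, \qquad p\in(0,1),
\]
so that the nonzero fixed points of $f_\lambda$ on $(0,1)$ correspond precisely to the level sets of $g$. A short algebraic manipulation (raise through $\exp$ and the $(\ell{-}1)$-th power on $[0,1]$) also yields the comparison $\lambda \geq g(p) \iff f_\lambda(p) \geq p$. In the case $0<\lambda<\lambda_T(\ell)=\inf_{x\in(0,1)} g(x)$, this forces $f_\lambda(p)<p$ for every $p\in(0,1)$, so the only fixed point in $[0,1)$ is $0$ and hence $p^{*}=0$. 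In the case $\lambda>\lambda_T(\ell)$, by definition of the infimum there exists $p_0\in(0,1)$ with $g(p_0)<\lambda$, equivalently $f_\lambda(p_0)>p_0$; combined with $f_\lambda(1)<1$ and continuity, the intermediate value theorem produces a fixed point $\tilde p\in(p_0,1)$. A monotonicity-induction on the sequence, namely $p_1=1\geq\tilde p$ and $p_{j+1}=f_\lambda(p_j)\geq f_\lambda(\tilde p)=\tilde p$, then forces $p^{*}\geq\tilde p>0$.

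The main obstacle is the \emph{no-slippage} step in the second case: one must rule out that the monotonically decreasing sequence drops below every positive fixed point and still converges to $0$. Strict monotonicity of $f_\lambda$ on $[0,1]$ is precisely what prevents this, via the trapping argument above. The companion equivalence $\lambda\geq g(p)\iff f_\lambda(p)\geq p$, together with locating a point $p_0$ where $f_\lambda(p_0)>p_0$ via the definition $\lambda_T(\ell)=\inf g$, are the other substantive ingredients. The boundary value $\lambda=\lambda_T(\ell)$ deserves care since the infimum defining $\lambda_T(\ell)$ may fail to be attained (indeed it is not for $\ell=2$, where $\lambda_T(2)=1$); however, the downstream use of this lemma in the proof of Theorem~\ref{theo:main1} only requires $\lambda_T(\ell)=\sup\{\lambda:\lim_j p_j(\ell,\lambda)=0\}$, so this boundary subtlety does not affect the rest of the argument.
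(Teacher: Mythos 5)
Your proof is correct and, in its second half, takes a genuinely different route from the paper's. For $0<\lambda<\lambda_T(\ell)$ the two arguments coincide in substance: a nonzero fixed point $p$ of $f_\lambda$ would force $\lambda=g(p)\geq\inf g=\lambda_T(\ell)$, a contradiction. For the other direction the paper argues by contradiction: assuming $p^{*}=0$ it extracts the last index $i$ with $p_{i-1}\geq\delta$, derives the strict inequality $\inf_{0<x<1}g(x)<g(\delta)$ for \emph{every} $\delta\in(0,1)$, and then contradicts this by taking $\delta=x^{*}$ to be a minimizer of $g$. You instead locate a positive fixed point $\tilde p$ directly (via the equivalence $\lambda>g(p_0)\iff f_\lambda(p_0)>p_0$ together with $f_\lambda(1)<1$ and the intermediate value theorem) and trap the decreasing sequence above it using the monotonicity of $f_\lambda$, giving $p^{*}\geq\tilde p>0$. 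Your trapping argument is the more robust of the two: the paper's choice $\delta=x^{*}$ requires the infimum of $g$ to be attained in $(0,1)$, which it justifies by asserting $\lim_{x\to0}g(x)=+\infty$ --- true for $\ell\geq3$ but false for $\ell=2$, where $g(x)=-\ln(1-x)/x\to1$ and $\lambda_T(2)=1$ is not attained. Indeed, at $\ell=2$, $\lambda=1$ the iteration $p_{j+1}=1-e^{-p_j}$ converges to $0$, so the lemma's claim that $p^{*}>0$ at the boundary $\lambda=\lambda_T(\ell)$ genuinely fails there; you correctly prove the two open cases $\lambda<\lambda_T(\ell)$ and $\lambda>\lambda_T(\ell)$, flag the boundary, and rightly observe that the proof of Theorem~\ref{theo:main1} only invokes the regime $\lambda<\lambda_T(\ell)$, so nothing downstream is affected. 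For completeness you could add that for $\ell\geq3$ the minimizer $x^{*}\in(0,1)$ does exist, $g(x^{*})=\lambda_T(\ell)\leq\lambda$ gives $f_\lambda(x^{*})\geq x^{*}$, and your trapping argument then also covers $\lambda=\lambda_T(\ell)$ in that case.
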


By the result of Lemma~\ref{lem:limit}, for any ${\ell\geq 2}$ the value of $\lambda_T(\ell)$ can be computed numerically. One can then obtain the optimal value of $\ell$, i.e., $\ell^{\star}$, which minimizes the ratio of $\frac{\ell}{\lambda_T(\ell)}$, and accordingly ${c(1)= \frac{\ell^{\star}}{\lambda_T(\ell^{\star})}}$ can be computed.

\begin{figure}
%\vspace{-0.5cm}
\includegraphics[width=0.52\textwidth]{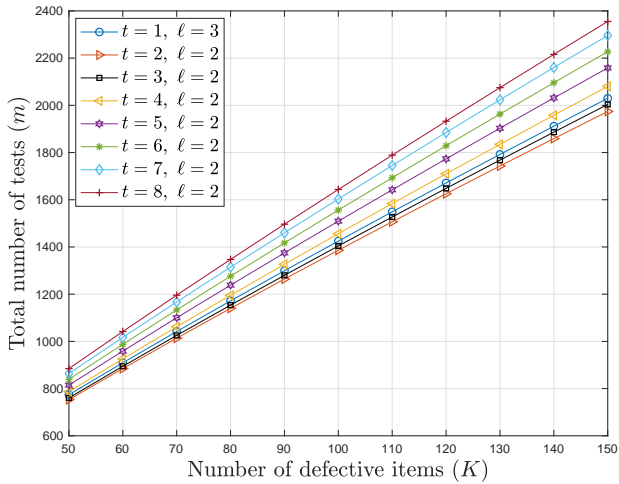}\vspace{-0.2cm}	
\caption{The number of required tests ($m$) to identify all $K$ defective items (for different values of $K$) among $N=2^{16}$ items for different values of ${t}$ obtained via analysis. %The total number of items is $N=2^{16}$.}
}
\label{fig:best}
\end{figure}

\section{Comparison Results}
In this section we will evaluate the performance of the proposed algorithm based on our theoretical analysis and the Monte Carlo simulations. 

Based on the results in Theorem~\ref{theo:main1} and Table~\ref{Table:constant}, Fig.~\ref{fig:best} depicts the total number of tests ($m$) required to identify all the defective items for different values of $t$. The number of items is assumed to be $N=2^{16}$. As it can be seen, when $t\in \{1,2,3\}$ the required number of tests for identifying all the defective items is less than that for larger values of $t$.

Using the Monte Carlo simulation, we also compare the performance of the proposed scheme for $t\in \{1,2,3\}$ with the performance of the Multi-Level Group Testing (MLGT) algorithm from \cite{8335478}. The MLGT scheme is a semi-quantitative group testing scheme where the result of each test is an integer in the set $\{0,1,2,\cdots,L\}$. Letting ${L\rightarrow \infty}$, the MLGT scheme becomes a QGT scheme. Based on the optimization that we have performed, the optimal left degree for the MLGT scheme is $\ell^{\star}=3$ when $L\rightarrow \infty$. For $K=100$ defective items among a population of $N=2^{16}$ items, the average fraction of unidentified defective items for the MLGT scheme and the proposed scheme are shown in Fig.~\ref{fig:sim} for different values of $m/K$. As it can be observed, the proposed scheme for all the three tested values of $t$ outperforms the MLGT scheme significantly. For instance, when the fraction of unidentified defective items is $2\times 10^{-4}$, the required number of tests for the MLGT scheme (for $\ell=3$) is $3$ times, $5$ times, and $7$ times more than that of the proposed scheme for $t=1$, $t=2$, and $t=3$, respectively.

\begin{figure}
%\vspace{-0.5cm}
\includegraphics[width=0.52\textwidth]{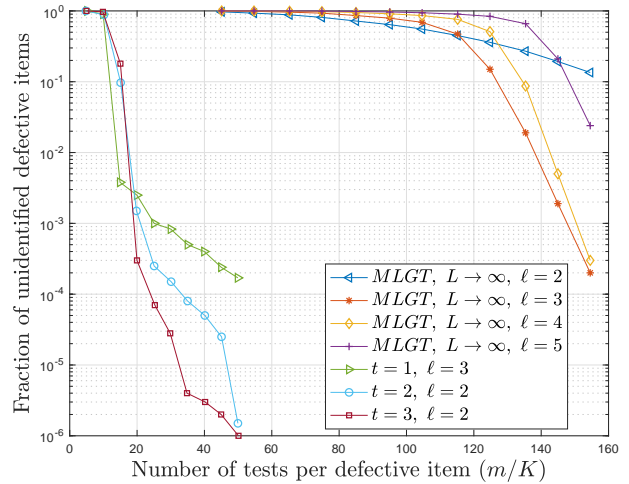}\vspace{-0.2cm}	
\caption{{The average fraction of unidentified defective items obtained via Monte Carlo simulations for $N=2^{16}$ items among which $K=100$ items are defective.}}\label{fig:sim}
\end{figure}

\bibliographystyle{IEEEtran}
\bibliography{QGTRefs}

\appendix[Proof of Lemmas]

\begin{proof}[Proof of Lemma~\ref{lem:complexity}]
Let us divide $\mbf{z}_i$ into two blocks, ${\mbf{z}_i=[{\mbf{z}_{i}^{(1)}}^{\mathsf{T}},{\mbf{z}_{i}^{(2)}}^{\mathsf{T}}]^{\mathsf{T}}}$, where $\mbf{z}_{i}^{(1)}=z_{i,1}$ and ${\mbf{z}_{i}^{(2)}=[z_{i,2},\cdots,z_{i,s}]^{\mathsf{T}}}$. We can rewrite \eqref{eq:obsvec} by placing $[1,\mbf{h}_i^{\mathsf{T}}]^{\mathsf{T}}$ at the coordinates of $\mbf{u}_i$'s in \eqref{eq:measureblock},
\begin{align*}\label{eq:deceq}
 \begin{bmatrix} \mbf{z}_{i}^{(1)}\\\mbf{z}_{i}^{(2)}
  \end{bmatrix}&= \begin{bmatrix}
  0& \ldots&0&1& 0&\ldots& 1&0& \ldots& 1\\
 \mbf{0}&\ldots&\mbf{0}&\mbf{h}_1& \mbf{0}&\ldots&\mbf{h}_2&\mbf{0}& \ldots&\mbf{h}_r
 \end{bmatrix}\mbf{x}.
\end{align*}
Assume that $j\leq t$ defective items are connected to the $i$-th right node. The first block, $\mbf{z}_{i}^{(1)}$, which is the first element of $\mbf{z}_i$, shows the number of defective items connected to the $i$-th right node. Recall that the first row of the signature matrix is an all-ones vector. It means that there are $r$ ones in the first row of every $\mathbf{A}_i$, $i\in\{1,2,\cdots,M\}$. Thus, all $r$ items connected to the $i$-th right node are included in the test corresponding to the first row of $\mathbf{A}_i$. The second block, $\mbf{z}_{i}^{(2)}$, is equal to the sum of $\mbf{h}_i$'s corresponding to the defective items connected to the $i$-th right node. Let $S_i$ be the set of indices of items (left nodes) that are connected to the $i$-th right node, and let $\mathbf{x}_{S_i}$ be the vector $\mathbf{x}$ restricted to the items indexed by $S_i$. Note that $\mathbf{x}_{S_i}$ can be viewed as an error vector for a $t$-error-correcting BCH code with parity-check matrix $\mathbf{H}_t$, and the block vector $\mbf{z}_i^{(2)}$ under modulo $2$ can be interpreted as the syndrome corresponding to the error vector $\mathbf{x}_{S_i}$. The Hamming weight of the error vector $X_{S_i}$, i.e., the number of ones in $\mathbf{x}_{S_i}$, is equal to $j$. When $j\leq t$, the error vector $\mathbf{x}_{S_i}$ can be decoded from the corresponding syndrome by decoding the underlying BCH code, and hence all $j$ defective items connected to the $i$-th right node can be identified. 
%Thus, the block vector $\mbf{z}_i^{(2)}$ under modulo $2$ can be interpreted as the syndrome corresponding to an error pattern of Hamming weight $j \leq t$. Since $\mathbf{H}_t$ is the parity-check matrix of a $t$-error-correcting BCH code, decoding the BCH code given the syndrome one can recover up to $t$ errors (defective items).
\end{proof}

\begin{proof}[Proof of Lemma~\ref{lem:DenEvol}]
 As mentioned earlier, the pruned graph is left-regular and the degree of the left nodes is $\ell$, but the pruned graph is not right-regular any longer and the degree of the right nodes can be any integer in $\{0,1,\cdots,\min(K,r)\}$. A tree-like representation of the neighborhood of an edge between a left node $v$ of degree $\ell$ and a right node $c$ of degree $i$ is shown in Fig.~\ref{fig:denevol}. The left node $v$ sends a \textquote{not identified} message to the right node $c$ at iteration $j+1$ with probability $p_{j+1}$ if all of its neighboring nodes $\{c_i\}_{i=1}^{\ell-1}$ have not been resolved at iteration $j$ which it happens with probability $(1-q_j)^{l-1}$. The right node $c$ of degree $i$ with probability $q_j$ passes a \textquote{resolved} message to the left node $v$ at iteration $j$ if the number of defective items connected to node $c$, i.e., $i$, is equal to $t$ or less which it happens with probability $\sum_{i=1}^{t}\rho_i$, or if the number of defective items connected to node $c$ is more than $t$ ($i>t$), but only ${k\in\{0,1,\cdots,t-1\}}$ of the $i-1$ defective items connected to node $c$ other than $v$ are unidentified (we know that $v$ is not identified yet) which this case happens with probability $\sum_{i=t+1}^{\min(K,r)}\rho_i\sum_{k=0}^{t-1}{i-1 \choose k}p_j^k(1-p_j)^{i-k-1}$. 
\end{proof}

\begin{proof}[Proof of Lemma~\ref{lem:lem3}]
Let us define the following function,
\[{f(\beta)\triangleq c(t)K\left(t\log\left(\frac{\ell N}{c(t)K\beta}+1\right)+1\right)}.\] 
We need to show that there exists some $\beta>1$ such that ${f(1)+1\geq \beta f(\beta)}$, or equivalently, $\beta f(\beta)-f(1) \leq 1$. Since $f(\beta)$ is a monotone decreasing function of $\beta$, $f(\beta)<f(1)$ for $\beta>1$. This inequality leads to ${\beta f(\beta)-f(1) < (\beta-1)f(1)}$. Hence, to guarantee that there exists some $\beta>1$ such that $\beta f(\beta)-f(1) \leq1$, it suffices to show that $(\beta-1)f(1)\leq1$ for some $\beta>1$. It is easy to see that $1<\beta\leq \frac{1}{f(1)}+1 $ is the satisfactory range. 
\end{proof}

\begin{proof}[Proof of Lemma~\ref{lem:complexity2}]
As mentioned in Lemma~\ref{lem:complexity}, the block vector $\mbf{z}_i^{(2)}$ under modulo $2$ can be interpreted as the syndrome corresponding to an error pattern of Hamming weight $j \leq t$. The location of the $j$ errors ($j$ defective items) can be determined from $\mbf{z}_i^{(2)}$ under modulo $2$ by first using a Berlekamp-Massey algorithm for finding the error locator polynomial. This step involves a time complexity of $\mathcal{O}(t^2 \log r)$ (all computations are performed in a finite field of size $2^m=r+1$). Once the error locator polynomial is determined, the roots of the error locator polynomial have to be found. A standard Chien search can be used to solve this step with complexity $\mathcal{O}(tr \log r)$; however, when $t \leq 4$, the Chien search can be avoided and the roots can be found directly using the algorithm in~\cite{chen1982formulas} with a complexity that is only $\mathcal{O}(t\log r)$. Therefore, for $t \leq 4$, the decoding complexity of resolving a $t$-resolvable right node is only logarithmic in $r$ (i.e., $\mathcal{O}(\log r)$).
\end{proof}

\begin{proof}[Proof of Lemma~\ref{lem:convergence}]
 Note that every bounded and monotonic sequence converges. From the definition, it is obvious that ${0 \leq p_{j} \leq 1}$ for any integer $\ell \geq 2$ and any real number $\lambda > 0$. Then, it suffices to show the monotonicity of the sequence $\{p_1,p_2,\dots\}$. The proof is based on induction. It is easy to see that ${p_2 < p_1}$, i.e., $\left(1-e^{-\lambda}\right)^{\ell-1} < 1$. The induction hypothesis is that ${p_j< p_{j-1}}$. We need to show that $p_{j+1}< p_{j}$. By the induction hypothesis, we have \[{\left(1-e^{-\lambda p_{j-1}}\right)^{\ell-1} < p_{j-1}}.\] Then, it is easy to see that \[1-e^{-\lambda \left(1-e^{-\lambda p_{j-1}}\right)^{\ell-1}} < 1-e^{-\lambda p_{j-1}},\] or equivalently, 
 \begin{equation}\label{eq:Dens1sep}
 \left(1-e^{-\lambda \left(1-e^{-\lambda p_{j-1}}\right)^{\ell-1}}\right)^{\ell-1} < \left(1-e^{-\lambda p_{j-1}}\right)^{\ell-1}
 \end{equation}
 Replacing $\left(1-e^{-\lambda p_{j-1}}\right)^{\ell-1}$ by $p_{j}$, we can rewrite \eqref{eq:Dens1sep} as \[\left(1-e^{-\lambda p_{j}}\right)^{\ell-1} < p_{j},\] which yields $p_{j+1}<p_j$, as was to be shown.
\end{proof}

\begin{proof}[Proof of Lemma~\ref{lem:limit}]
By Lemma~\ref{lem:convergence}, we know that $p^{*}$ exists, and it must be a solution to the following equation,
\begin{equation}\label{eq:pinfty}
 p^{*}=\left(1-e^{-\lambda p^{*}}\right)^{\ell-1}.   
\end{equation}

We first show that for ${0 < \lambda < \lambda_T(\ell)}$, it holds that $p^{*}=0$. It suffices to show that for ${0 < \lambda < \lambda_T(\ell)}$ and any integer $\ell\geq 2$, the only solution of \eqref{eq:pinfty} is $p^{*}=0$. Obviously, ${p^{*}=0}$ is a solution of \eqref{eq:pinfty} for any $0 < \lambda < \lambda_T(\ell)$ and any integer $\ell \geq 2$. Thus, we need to show that for $0 < \lambda < \lambda_T(\ell)$ and any integer $\ell\geq 2$, and any $0< \epsilon < 1$, we have ${\epsilon\neq \left(1-e^{-\lambda \epsilon}\right)^{\ell-1}}$. The proof is by the way of contradiction. Suppose that ${\epsilon= \left(1-e^{-\lambda \epsilon}\right)^{\ell-1}}$ for some $0<\epsilon < 1$. By solving this equation for $\lambda$, we get \[\lambda=  \frac{\ln(1-\epsilon^{\frac{1}{\ell-1}})}{-\epsilon}.\] On the other hand, we know that \[\lambda < \lambda_T(\ell)=\displaystyle \inf_{0<x<1} \left( \frac{\ln(1-x^{\frac{1}{\ell-1}})}{-x}\right).\] Thus, we have \[\frac{\ln(1-\epsilon^{\frac{1}{\ell-1}})}{-\epsilon} < \displaystyle \inf_{0<x<1} \left( \frac{\ln(1-x^{\frac{1}{\ell-1}})}{-x}\right)\] for some ${0<\epsilon < 1}$. Obviously, this inequality cannot hold, and we reach a contradiction, as desired.

Next, we shall show that for any ${\lambda\geq \lambda_T(\ell)}$, we have ${p^{*}> 0}$. From \eqref{eq:pinfty}, it follows that \[\lambda= \frac{\ln(1-{p^{*}}^{\frac{1}{\ell-1}})}{-p^{*}}.\] Hence, $\lambda\geq \lambda_T(\ell)$ implies that \[ \frac{\ln(1-{p^{*}}^{\frac{1}{\ell-1}})}{-p^{*}} \geq \displaystyle \inf_{0<x<1} \left( \frac{\ln(1-x^{\frac{1}{\ell-1}})}{-x}\right).\] Again, the proof is by the way of contradiction. Suppose that ${p^{*}=0}$, i.e., the sequence $\{p_1,p_2,\dots\}$ converges to $0$. Therefore, for any ${\delta > 0}$, there exist a positive integer $i$ such that for any $j\geq i$, $\abs{p^{*}-p_j}=p_j < \delta$. Consider an arbitrary $0<\delta<1$. Let $i$ be such that $p_{i-1}\geq \delta$ and $p_j<\delta$ for all $j\geq i$. Note that $p_i < \delta$ implies that $\left(1-e^{-\lambda p_{i-1}}\right)^{\ell-1} < \delta$. This inequality can be rewritten as \[{\lambda< \frac{\ln(1-\delta^{\frac{1}{\ell-1}})}{-p_{i-1}}}.\] Using the facts that $\lambda\geq \lambda_T(\ell)$ and $p_{i-1}\geq \delta$, we have
\begin{equation}\label{eq:14}
\displaystyle \inf_{0<x<1} \left( \frac{\ln(1-x^{\frac{1}{\ell-1}})}{-x}\right) < \frac{\ln(1-\delta^{\frac{1}{\ell-1}})}{-p_{i-1}} 
\end{equation}
\begin{equation}\label{eq:15}
 \frac{\ln(1-\delta^{\frac{1}{\ell-1}})}{-p_{i-1}} \leq \frac{\ln(1-\delta^{\frac{1}{\ell-1}})}{-\delta}
\end{equation}
Combining~\eqref{eq:14} and~\eqref{eq:15}, we get 
\begin{equation}\label{eq:inequality}
\displaystyle \inf_{0<x<1} \left( \frac{\ln(1-x^{\frac{1}{\ell-1}})}{-x}\right) < \frac{\ln(1-\delta^{\frac{1}{\ell-1}})}{-\delta}.
\end{equation} Let $f(x)\triangleq \frac{\ln(1-x^{\frac{1}{\ell-1}})}{-x}$, and let $x^{*}$ be such that \[\inf_{0<x<1}f(x)= \frac{\ln(1-{x^{*}}^{\frac{1}{\ell-1}})}{-x^{*}}.\] Since ${\lim_{x\rightarrow 0} f(x)=\lim_{x\rightarrow 1}f(x)=+\infty}$, obviously we have ${0<x^{*}<1}$. Taking $\delta=x^{*}$, we will have
\begin{equation}\label{eq:equality}
  \displaystyle \inf_{0<x<1} \left( \frac{\ln(1-x^{\frac{1}{\ell-1}})}{-x}\right) = \frac{\ln(1-\delta^{\frac{1}{\ell-1}})}{-\delta}.
\end{equation}
From \eqref{eq:inequality} and \eqref{eq:equality}, we arrive at a contradiction. This completes the proof. % since there exist a $0<\delta<1$ such that \eqref{eq:inequality} does not hold.
\end{proof}
\end{document}